\newcommand{\HRule}{\rule{\linewidth}{0.35mm}}
\newcommand{\comm}[1]{}
\newcommand{\lang}[1]{\ensuremath{L(#1)}}
\newcommand{\bisim}[1]{\ensuremath{\approx}}
\newcommand{\semantics}[1]{\ensuremath{\llbracket #1 \rrbracket}}
\newcommand{\expheight}[1]{\ensuremath{\mathit{height}(#1)}}
\newcommand{\chr}[1]{\text{\tt #1}}
\newcommand{\costfn}{\ensuremath{\mathit{cost}}}
\newcommand{\cost}[1]{\ensuremath{\costfn(#1)}}
\newcommand{\rewrite}[1]{\ensuremath{\mathit{rewrite}(#1)}}
\lstdefinestyle{ocaml}{
  language=caml,
  morekeywords={struct,module},
  %showspaces=true,
  %showstringspaces=true,
  columns=fixed
}
\lstdefinelanguage{JavaScript}{
  keywords={const, typeof, new, true, false, catch, function, return, null, catch, switch, let, var, if, in, while, do, else, case, break},
  keywordstyle=\color{blue},
  ndkeywords={class, export, boolean, throw, implements, import, this},
  ndkeywordstyle=\color{darkgray}\bfseries,
  identifierstyle=\color{black},
  sensitive=false,
  comment=[l]{//},
  morecomment=[s]{/*}{*/},
  commentstyle=\color{purple}\ttfamily,
  stringstyle=\color{red}\ttfamily,
  morestring=[b]',
  morestring=[b]",
  xleftmargin=\parindent
}
\newcommand{\pages}[1]{}
\tikzstyle{new style 0}=[fill=white, draw=black, minimum size=4pt, shape=circle, inner sep=4pt]
\tikzstyle{tree_node}=[fill=black, draw=black, shape=circle, minimum size=4pt, inner sep=0pt]
\tikzstyle{edge_label}=[fill=none, draw=none, shape=circle, inner sep=0pt, minimum size=10pt]
\tikzstyle{tree_node}=[fill=white, draw=black, shape=circle, inner sep=1pt, minimum size=3pt]
\tikzstyle{rect}=[fill=none, draw=black, shape=rectangle, dotted]
\tikzstyle{solid_black}=[fill=black, draw=black, shape=circle, minimum size=3pt, inner sep=0pt]
\tikzstyle{big_dashed}=[fill=none, draw={rgb,255: red,128; green,128; blue,128}, shape=circle, dotted, minimum size=14pt, thick]
\tikzstyle{new edge style 0}=[draw=black, label=Hello, ->]
\tikzstyle{new edge style 1}=[->, draw=black, dashed]
\tikzstyle{accepted}=[draw=blue, ->, line width=1.5pt]
\tikzstyle{accepted_dash}=[draw=blue, ->, line width=1.5pt, dashed]
\tikzstyle{black solid}=[fill=none, ->, thick]
\tikzstyle{reg_black}=[->]
\tikzstyle{red_del}=[draw=red, dashed, ->]
\tikzstyle{gray_solid}=[draw={rgb,255: red,128; green,128; blue,128}, ->]
\tikzstyle{blue_solid}=[draw=blue, ->, thick]
\newenvironment{customthm}[2]
  {\innercustomthm[#2]}
  {\endinnercustomthm}
\begin{document}

%% DRAFT - LINE NUMBERS ON
%\linenumbers
%\linenumbersep 6pt\relax
%\renewcommand{\linenumberfont}{\normalfont\scriptsize\color{gray}}

\title{Optimizing Regular Expressions via Rewrite-Guided Synthesis}

%\authornote{with author1 note}          %% \authornote is optional;
\author[Jedidiah McClurg]{Jedidiah McClurg}
%\affiliation{\institution{Colorado State University}\country{USA}}
\affiliation{\institution{Colorado State University}\country{USA}}
\author[Miles Claver]{Miles Claver}
\authornote{Equal contribution}
\affiliation{\institution{Colorado School of Mines}\country{USA}}
\author[Jackson Garner]{Jackson Garner\footnotemark[1]}
\affiliation{\institution{Colorado School of Mines}\country{USA}}
\author[Jake Vossen]{Jake Vossen\footnotemark[1]}
\affiliation{\institution{Colorado School of Mines}\country{USA}}
\author[Jordan Schmerge]{Jordan Schmerge}
\affiliation{\institution{Colorado School of Mines}\country{USA}}
\author[Mehmet E. Belviranli]{Mehmet E. Belviranli}
\affiliation{\institution{Colorado School of Mines}\country{USA}}

%\author{}
%\authorrunning{}
%\titlerunning{}
%\institute{}

%% can be repeated if necessary
%\orcid{nnnn-nnnn-nnnn-nnnn}             %% \orcid is optional

%% 2012 ACM Computing Classification System (CSS) concepts
%% Generate at 'http://dl.acm.org/ccs/ccs.cfm'.
%\begin{CCSXML}
%<ccs2012>
%<concept>
%<concept_id>10011007.10011006.10011008</concept_id>
%<concept_desc>Software and its engineering~General programming languages</concept_desc>
%<concept_significance>500</concept_significance>
%</concept>
%<concept>
%<concept_id>10003456.10003457.10003521.10003525</concept_id>
%<concept_desc>Social and professional topics~History of programming languages</concept_desc>
%<concept_significance>300</concept_significance>
%</concept>
%</ccs2012>
%\end{CCSXML}
%
%\ccsdesc[500]{Software and its engineering~General programming languages}
%\ccsdesc[300]{Social and professional topics~History of programming languages}
%% End of generated code

%% Keywords
%% comma separated list
%\keywords{synthesis, rewriting, optimization, regular expression, e-graph, equality saturation, Egg, SyGuS}  %% \keywords are mandatory in final camera-ready submission

\begin{abstract}
Regular expressions are pervasive in modern systems.
Many real-world regular expressions are inefficient, sometimes to the extent that they are vulnerable to complexity-based
attacks, and while much research has focused on \textit{detecting} inefficient regular expressions or
\textit{accelerating} regular expression matching at the hardware level,
we investigate automatically \textit{transforming} regular expressions to remove inefficiencies.
We reduce this problem to general \textit{expression optimization}, an important task necessary in a variety of
domains even beyond compilers, e.g., digital logic design, etc.
Syntax-guided synthesis (SyGuS) with a cost function can be used for this purpose,
but ordered enumeration through a large space of candidate expressions can be prohibitively expensive.
Equality saturation is an alternative approach which allows efficient construction
and maintenance of expression equivalence classes generated by rewrite rules,
but the procedure may not reach saturation, meaning global minimality cannot be confirmed.
We present a new approach called rewrite-guided synthesis (ReGiS), in which a unique interplay between
SyGuS and equality saturation-based rewriting helps to overcome these problems,
resulting in an efficient, scalable framework for expression optimization.
%
%We demonstrate the flexibility and practicality of our approach by applying ReGiS in the context of
%regular expressions.

%\keywords{synthesis, rewriting, regular expressions}  %%
%\vspace{-1em}
%\keywords{synthesis, rewriting, optimization, regular expression, e-graph, equality saturation, Egg, SyGuS}
\end{abstract}

\maketitle

% DRAFT - PAGE NUMBERS ON
%\thispagestyle{plain}
%\pagestyle{plain}

\section{Introduction \pages{(3 pages)}}
\label{sec:intro}

Because regular expressions and their associated operations (matching, etc.) play such a pivotal role in modern systems,
there has been much interest in developing hardware acceleration for regular expressions
\cite{DBLP:conf/IEEEpact/YangP11,DBLP:conf/micro/LunterenHHBSA12,DBLP:conf/IEEEpact/CameronSSH0HL14,DBLP:conf/micro/GogteKCDW16,DBLP:journals/tecs/ParraviciniCSPS21,DBLP:conf/pldi/Mamouras22}.
Our work investigates a complementary approach, namely optimizing regular expressions at the software level.
Because there are other popular formalisms that share similar properties to regular expressions (e.g., Boolean algebra), we
frame the problem in terms of general expression optimization, enabling straightforward extensions in other domains.

Expression optimization is a type of \textit{program synthesis} problem---we must automatically construct a program (expression) that
satisfies some specification (e.g., minimal cost, and equality to the input expression).
In the mid-Eighties, \citet{DBLP:journals/computer/Brooks87} famously identified several technological areas unlikely
to result in a ``silver bullet'' in terms of increased programmer productivity and software quality, and
program synthesis appeared in the list.
Since then, significant strides have been made in some of these areas, perhaps most
notably, data-centric advances in machine learning
which have enabled software to perform a variety of complex tasks, including winning
chess matches against professionals, driving cars, and landing rockets.
Overall, progress in the area of program synthesis
has seen more moderate gains.
One notable approach is \textit{syntax-guided synthesis} (SyGuS) \cite{DBLP:series/natosec/AlurBDF0JKMMRSSSSTU15},
which has leveraged domain-specific languages (DSLs) and exploited fast solvers
(e.g., SAT and SMT \cite{DBLP:journals/jacm/NieuwenhuisOT06}) to produce synthesizers usable in areas such as
distributed systems \cite{DBLP:conf/pldi/UdupaRDMMA13}, robotics \cite{DBLP:journals/corr/ChasinsN16}, biochemical modeling \cite{DBLP:conf/cav/CardelliCFKLPW17}, networking
\cite{DBLP:conf/cav/McClurgHC17}, and many more. Conceptually, SyGuS 
performs a search over the space of all program expressions, checking at each step
if the expression satisfies the specification. Although various techniques have
been devised to make this search more efficient, many of the ``big ideas'' that
have allowed advancement elsewhere (big data, novel hardware
processing units, massive parallelization) have proven more difficult to
utilize in this type of syntax-guided search.

\subsection{Problem Description: Expression Optimization}
In this paper, we develop a new optimal synthesis framework called Rewrite-Guided
Synthesis (ReGiS) which extends SyGuS, making it more flexible and
amenable to parallelization.
Our goal is to take an initially-correct
\textit{source expression}, and transform it into a \textit{better} equivalent
expression. The user can provide the expression language,
an optional set of semantics-preserving rewrite rules, a cost metric for
expressions, and a source expression, and the synthesizer outputs an equivalent
expression that is minimal with respect to the cost metric.

\subsection{Existing Approaches}
Several existing approaches can be used for expression optimization.
\textit{Optimal Synthesis} \cite{DBLP:conf/popl/BornholtTGC16,DBLP:conf/cav/CardelliCFKLPW17} uses a cost metric and techniques such as counterexample-guided enumeration to search for an optimal program satisfying a specification.
\textit{Rewriting} \cite{DBLP:journals/corr/abs-1012-1802,DBLP:journals/pacmpl/WillseyNWFTP21} uses syntactic transformations and efficient data structures to produce equivalent expressions with differing structure.
\textit{Superoptimization} \cite{DBLP:conf/asplos/Schkufza0A13,DBLP:conf/asplos/PhothilimthanaT16} transforms small snippets of code into equivalent and higher-performing snippets, using enumerative or rewriting-based methods.
Section \ref{sec:rel_work} gives more detail about these approaches.
In contrast to these, ReGiS uses a unique combination of enumeration and rewriting, resulting in a more flexible and efficient technique.

\subsection{ReGiS Novelties}
ReGiS targets three core improvements over previous approaches.

%\begin{inparaenum}[(\bfseries 1\normalfont)]

\paragraph{(1) Combining enumeration (using semantic correctness/equality) and syntactic rewriting.}
Enumerative search---symbolically or explicitly iterating through program expressions in increasing order
with respect to cost while checking semantic correctness/equality---is often not efficient when the goal is to
\textit{optimize} a given input expression in some way, i.e., transform it into
an equivalent expression with lower cost.
Specifically, since the input expression is already \textit{correct},
it may be counterproductive to ``start from scratch'' when building an
equivalent expression.
In many domains, it is possible to find semantics-preserving
transformations \cite{DBLP:conf/cav/CernyHRRT13} which allow \textit{rewriting} an expression to obtain
lower cost, with respect to a metric like expression size or time complexity.
In some cases, these transformation rules have useful properties, e.g., soundness and completeness in the case of Kleene algebra
for regular expressions
\cite{DBLP:conf/lics/Kozen91}, but other times, this is not the case.
Thus, just as purely enumerative synthesis has drawbacks, so too does a purely
rewrite-based approach, since it requires careful design of the rewrite rules.
Additionally, the optimal target expression may have a large distance from the
source expression with respect to the rules, and rewriting-based approaches
can become rapidly overwhelmed as the search depth increases.
For these reasons, we show how to combine enumeration with rewriting, %in SyGuS.
allowing exploration of expressions which are \textit{locally close} (syntactically related) to seen expressions, as well as expressions which are \textit{globally small} (having overall lowest cost).

\paragraph{(2) Using parallelizable bi-directional search.}
Rather than simply starting from the
source expression, and trying to discover a chain of equivalences to a specific
target expression, we additionally try to construct these chains
\textit{backward} toward the source from several candidate targets
simultaneously.

\paragraph{(3) Enabling customizable expression languages and semantics.}
Our approach is cleanly parameterized over a user-specifiable expression language. While we focus on the domain of regular expressions, the approach would be equally applicable in other domains such as Boolean logic, process algebras, etc.

\begin{figure}[b]
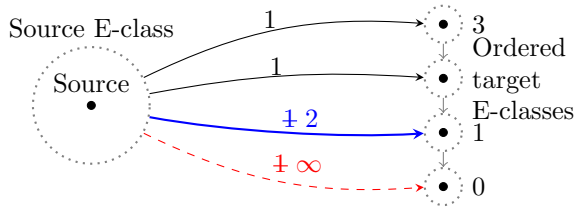

%\begin{wrapfigure}{r}{0.635\textwidth}%
%\vspace{-12pt}
\centering
{\tikzfig{figures/overlay}}
%\vspace{-2em}
%{\includegraphics[trim = 0in 0in 0in 0in, clip,width=0.55\linewidth]{figures/overview.pdf}}
    \caption{Overlay graph: edge labels encode \textit{expense estimates}; dashed/red edge shows an \textit{inequality} discovered by a Unifier, which causes edge deletion; and thick/blue edge shows a Unifier \textit{timeout}, which increases estimate.}
\label{fig:overview}
%\vspace{-30pt}
%\end{wrapfigure}%
\end{figure}

\subsection{ReGiS Approach Overview}

ReGiS consists of three components:
\textit{Enumerator}, \textit{Updater}, and \textit{Unifier}.
%each of which exploits parallelism.
The Enumerator iterates through candidate target expressions in increasing
order of cost, adding each new candidate to the Updater.
When the Updater receives a new target expression, it is added to an efficient 
\textit{E-graph} data structure \cite{DBLP:journals/corr/abs-1012-1802,DBLP:journals/pacmpl/WillseyNWFTP21}, allowing \textit{all} known rewrite rules to be applied to the
expression, which enables compact maintenance of the \textit{equivalence 
classes
(E-classes)} for the source and candidate target expressions, modulo the
known syntactic rewrite rules.
The Updater also maintains an
\textit{overlay graph}
(Figure \ref{fig:overview}),
with E-classes as
nodes, and edge labels representing the (initially unit) estimated
expense of semantic equality checks between classes.
In parallel with these processes, Unifiers systematically attempt to
merge E-classes: each Unifier selects a low-cost overlay graph edge, chooses expressions from the two corresponding E-classes, and performs a semantic equality check.
If its equality check \textit{succeeds}, it tells the Updater to union the two
%\begin{figure}[tb]
E-classes, and
can potentially provide the Updater with a new rewrite rule(s).  
%\end{figure}
If its equality check \textit{fails}, it removes the associated edge.
%and asks the Updater to update other edge labels accordingly.
If its equality check \textit{times out}, it increases its edge's expense estimate.
Eventually, a target E-class that is minimal with respect to cost will
be unioned with the source E-class, %
allowing ReGiS to terminate and report the global minimum.
The current lowest-cost result is available as the
minimum-cost expression in the source E-class.
% http://www.cs.tufts.edu/~jfoster/papers/cav15.pdf
In contrast to approaches  that extend SyGuS by parallelizing 
enumeration steps \cite{DBLP:conf/cav/JeonQSF15},
our approach does the syntactic (rewriting-based) and semantic (equality-based)
parts of the search in parallel.

\subsection{Paper Organization}

This paper is organized as follows:
$\S$\ref{sec:motiv} demonstrates why
%regular expression denial of service (ReDoS) problem
superlinear regular expressions are problematic, and shows how ReGiS can be used to address this; %
$\S$\ref{sec:regis} formalizes our approach, and presents correctness results;
$\S$\ref{sec:redos} shows the details of using ReGiS for regular expression optimization;
$\S$\ref{sec:eval} describes our prototype implementation, and provides experimental results;
$\S$\ref{sec:discussion} and $\S$\ref{sec:rel_work} describe future work and related work; and $\S$\ref{sec:conclusion} concludes.

\section{Motivating Example: Optimizing Superlinear Regular Expressions \pages{(4 pages)}}
\label{sec:motiv}

We demonstrate the utility of our framework by examining problematic behavior of
superlinear regular expressions.
%where we use ReGiS to simplify superlinear regular expressions.
\textit{Catastrophic backtracking} behavior can be triggered
by crafting input strings to
target inefficiencies in the regular expression.
As an example, consider the regular expression
$R_1 R_2 = \chr{a}^{\ast}\chr{a}^{\ast}$.
If we try
to match the entire input string $c_1 c_2 c_3 \ldots c_n c_{n+1} = \chr{a}\chr{a}\chr{a}\ldots \chr{a}\chr{b}$
using this regular expression, we might first greedily capture $c_1\ldots c_n$
using $R_1$, only to realize that there is no way to match the trailing
\chr{b}.
We would then need to backtrack and accept $n-1$ leading
\chr{a} characters with $R_1$, and let $R_2$
match the final \chr{a}, which would similarly fail due to the trailing
\chr{b} in the input.
This would continue, with $R_1$ accepting $c_1\ldots c_k$,
and $R_2$ accepting $c_{k+1}\ldots c_n$, until all $k$ have been tried,
resulting in quadratic runtime.

One way to avoid this issue is to use non-backtracking 
algorithms.
For example, we could convert the regular expression to a
\textit{nondeterministic finite automaton} (NFA) using Thompson's construction \cite{Thompson}, and then determinize the NFA,
but this can result in exponential explosion of the automaton size, so this
approach is not typically used in practice.
\citet{Thompson} also presented an automaton simulation algorithm which can match a string against an NFA in polynomial time.
Unfortunately, many real-world regular expression engines have chosen to instead
rely on backtracking algorithms,
due to complex extensions to the regular
expression language (backreferences, etc.).
\textit{Perl-Compatible Regular Expressions} (PCRE) is one such implementation
\cite{DBLP:journals/corr/BerglundDM14}.

%Beyond just a theoretical curiosity related to pathological regular expressions and input strings
These superlinear
regular expressions appear with concerning frequency
in real-world systems \cite{DBLP:conf/uss/StaicuP18,Davis1}, and
real attacks %which exploit superlinear regular expressions
have been observed.
%\cite{davis_thesis}.
%
As an orthogonal approach to ours, \textit{static analysis} has been used to
\textit{detect} exponential
regular expressions \cite{DBLP:journals/corr/RathnayakeT14}.
Note that focusing on
\textit{exponential} regular expressions
is insufficient---although the maximum number of operations for backtracking regular expression algorithms is bounded by $2^{\Theta(n)}$ \cite{DBLP:journals/iandc/HromkovicSKKS02,palioudakis2015quantifying}, \textit{polynomial} complexity
can also be problematic \cite{Weideman}.

Several approaches have been identified for dealing with superlinear
regular expressions \cite{Davis1}, 
the most promising of which seems to be transforming the
expression
into an equivalent but less complex one. To our knowledge, however, this has
not been solved in a comprehensive way.
%\clearpage
In this section, we will examine the problem of optimizing superlinear regular expressions in greater detail, and show how the various components of our approach work together to tackle this problem.
Consider Figure \ref{fig:superlinear}, which shows the performance of the standard PCRE matching algorithm for the
regular expressions $\chr{a}^{\ast}$ and
$\chr{a}^{\ast\ast}$.
These regular expressions are \textit{semantically equivalent}, i.e., they
recognize the same language, but their
differing syntactic structures cause
drastically different performance when matching the
previously-described input string
$\chr{a}\chr{a}\ldots\chr{a}\chr{b}$.
Regular expression $\chr{a}^{\ast}$ has
linear performance $\Theta(n)$, while $\chr{a}^{\ast\ast}$ has exponential performance $\Theta(2^n)$, and each additional added star increases the base of the exponent.
Intuitively, at each step, $\chr{a}^{\ast}$ has only two options: accept
a \textit{single} \chr{a} character or fail on the
trailing \chr{b} character,
but $\chr{a}^{\ast\ast}$ can accept an arbitrarily-long \textit{sequence} of \chr{a} characters at each step,
forcing the algorithm to try all possible
combinations of sequence lengths before failing.
In Sec. \ref{sec:simp}, we cover this example in more detail,
and introduce a cost metric that characterizes such backtracking behavior.

\begin{figure}[tb]
%\begin{wrapfigure}{r}{0.66\textwidth}
%\begin{figure}[tb]
%\vspace{-18pt}
    \centering
    \begin{tikzpicture}[scale=0.99]
        \begin{axis}[width=3in,height=1.5in,ytick={0,20,40,60,80,100},xlabel={Input string length},ylabel={Number of steps},xmin=1,xmax=5,domain=0:3145728, restrict y to domain=0:100, legend style={at={(0.05,0.900)},anchor=north west},legend cell align={left}]
    \addplot[raw gnuplot,smooth,blue,forget plot] gnuplot {
    %#set datafile separator ",";
    f(x)=a + b*(2^(c*x));
         fit f(x) 'figures/superlinear.csv' using 1:2 via a,b,c;
         plot [x=0:10] f(x);
       };
    \legend{$\chr{a}^{\ast\ast}$,$\chr{a}^{\ast}$}       
    \addplot table [x=len, y=astarstar, col sep=tab, only marks] {figures/superlinear.csv};
    \addplot table [x=len, y=a, col sep=tab] {figures/superlinear.csv};
\end{axis}
\end{tikzpicture}
\caption{Matching w/ semantically-equivalent expressions (input $\chr{a}\chr{a}\ldots\chr{a}\chr{b}$).}
\label{fig:superlinear}
%\end{figure}
%\vspace{-28pt}
%\end{wrapfigure}
\end{figure}

%\begin{figure}[b]
%\centering
%{\includegraphics[trim = 1.95in 4.1in 3.1in 1.1in, clip,width=0.54\linewidth]{figures/superlinear.pdf}}
%\caption{String matching performance for two semantically-equivalent regular expressions (input string $\chr{a}\chr{a}\ldots\chr{a}\chr{b}$).}
%\label{fig:superlinear}
%\end{figure}

\subsection{Limitations of Basic Rewriting}
\label{subsec:limitations}
One basic optimization approach
%for simplifying a regular expression
is to perform rewriting using the
well-known Kleene algebra axioms \cite{DBLP:conf/lics/Kozen91}, at each step checking
whether we have found an expression that has lower cost
according to our metric.
For example, given $\chr{a}+\chr{a}+\chr{a}$ (where $+$ denotes alternation), we can
use the idempotence rule $x{+}x\xleftrightarrow{}x$
to perform the rewrites
$\chr{a}{+}\chr{a}{+}\chr{a}\,\xrightarrow{}\,\chr{a}{+}\chr{a}\,\xrightarrow{}\,\chr{a}$, and we will have reached
an equivalent regular expression with lower cost.

This basic approach scales poorly---in
general, we would need to perform a rewrite-based
search, iterating through the various rewrite rules.
The search can ``loop'',
e.g., rewriting an expression into progressively larger
expressions.
Note that we cannot restrict rewrites to only
\textit{shrink} expression cost, because in some cases,
\textit{global} minimization necessitates local
monotonic (or even increasing) rewrites during the search.
% As an example, consider
% $\chr{a}\chr{b} + \chr{a}(\chr{b}+\chr{c}) + \chr{a}\chr{c}$.
% We must rewrite it to an expression with % "syntactically-larger"
% the \textit{same cost}, en route to the global minimum (arrow angle indicates the change in cost due to that rewrite).
% % \xrightarrow{\phantom{\qquad}}
% %\begin{align*}
% %& \chr{a}\chr{b} + \chr{a}(\chr{b}+\chr{c}) + \chr{a}\chr{c} \\
% %{\rightarrow}\; & {\chr{a}\chr{b} + \chr{a}\chr{b}+\chr{a}\chr{c} + \chr{a}\chr{c}} \\
% %\searrow\; & \chr{a}\chr{b} +  \chr{a}\chr{c} \\
% %\rightarrow\; & \chr{a}(\chr{b} +  \chr{c})
% %\end{align*}
% $\chr{a}\chr{b} + \chr{a}(\chr{b}+\chr{c}) + \chr{a}\chr{c}  %
% \rightarrow\;  \chr{a}\chr{b} + \chr{a}\chr{b}+\chr{a}\chr{c} + \chr{a}\chr{c} %
% \searrow\;  \chr{a}\chr{b} +  \chr{a}\chr{c} %
% \rightarrow\;  \chr{a}(\chr{b} +  \chr{c})$.
As an example, optimizing $1+\chr{a}^*$ (where $1$ denotes the empty string) requires a rewrite which initially increases cost. Specifically, using arrow angle to indicate change in cost due to a rewrite, we have $1+\chr{a}^* {\nearrow 1+1+\chr{a}\chr{a}^*} \searrow 1+\chr{a}\chr{a}^* \searrow \chr{a}^*$.
Regular expression optimizers based on
this type of rewrite-based search often timeout before making any progress.
For regular expressions such as
$\chr{a}+\chr{b}+\chr{c}+\chr{d}+\chr{e}+\chr{d}+\chr{c}+\chr{b}+\chr{a}$ (which is
clearly reducible to $\chr{a}+\chr{b}+\chr{c}+\chr{d}+\chr{e}$),
the search would need to conceptually ``sort'' the
characters using commutativity of alternation,
and then use idempotence, requiring a huge amount of search.

\begin{figure}
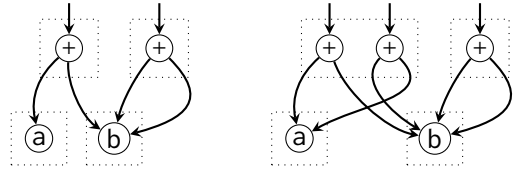

%\begin{wrapfigure}{r}{0.33\textwidth}
%\begin{figure}[tb]
%\vspace{-2em}
\centering
%{\includegraphics[trim = 0in 0in 0in 0in, clip,width=0.30\linewidth]{figures/egraph.pdf}}%
\tikzfig{figures/egraph}\hspace{2em}%
\tikzfig{figures/egraph2}
\vspace{4pt}
\caption{(a) E-graph initially built from $\chr{a}+\chr{b}$ and $\chr{b}+\chr{b}$, and (b) after equality saturation using rewrite rule $x + y \xleftrightarrow{} y + x$.}
\label{fig:egraph}
%\end{figure}
%\vspace{-2em}
%\end{wrapfigure}
\end{figure}

\subsection{Limitations of E-Graph-based Rewriting}
\textit{Equality saturation} is a technique for
efficiently implementing a rewriting-based task
such as the one previously described.
This approach uses a data structure called
an E-graph to compactly store one or more initial expressions,
along with expressions derivable from these via
a set of rewrite rules.
Figure \ref{fig:egraph}(a) shows an example, namely the
E-graph containing regular expressions
$\chr{a} + \chr{b}$ and $\chr{b}+\chr{b}$. Equality saturation can
apply the \textit{commutativity} rewrite rule, which adds
the expression $\chr{b}+\chr{a}$ to 
the E-graph, resulting in Figure \ref{fig:egraph}(b).
Note that each subexpression \chr{a} and \chr{b}
is stored only once---the E-graph maintains this
type of expression sharing to keep the size
compact.
The dotted boxes in the figure represent
E-classes---equivalence classes with respect
to the rewrite rules.
Expressions $\chr{a} + \chr{b}$ and $\chr{b} + \chr{a}$
are in the same E-class, since they
are equivalent with respect to the rewrite rule, but
$\chr{b}+\chr{b}$ is in a separate E-class.

With an
E-graph-based rewriting approach,
the straightforward way to
implement regular expression optimization is to first add the
source expression to the E-graph,
run equality saturation using all
of the Kleene algebra axioms as
rewrite rules, and iterate
over the source regular expression's
E-class to find the minimal
equivalent expression with respect
to the cost metric.
There are two key problems with this.
\begin{inparaenum}[(1)]
\item Although \textit{cyclic} edges in the E-graph can sometimes be used to encode infinite sets, in general, equality saturation may not have enough time or resources to
fully \textit{saturate} the E-graph in cases where there are infinitely many equivalent expressions
with respect to the Kleene algebra axioms
	(e.g., $\chr{a} = \chr{a}+\chr{a}=\chr{a}+\chr{a}+\chr{a}=\cdots$),
meaning the procedure may need to time out.
\item ReGiS is designed to be \textit{general}, and in some cases, we may have a more limited set of rewrite rules---in particular, we may not have a \textit{completeness} result, meaning that for some semantically equivalent expressions, it may not be possible to show their equivalence using the syntactic rewrite rules alone.
%so reverting to semantic methods may be necessary.
\end{inparaenum}

As an example, consider optimizing $(1+\chr{a}^{\ast}\chr{a})^{\ast\ast}$, using only two rewrite rules: $1+x x^{\ast}\xleftrightarrow{1} x^{\ast}$, $x^{\ast\ast} \xleftrightarrow{2} x^{\ast}$.
What we would need is a chain of rewrites:
$$(1+\chr{a}^{\ast}\chr{a})^{\ast\ast} \,\xrightarrow{2}\, (1+\chr{a}^{\ast}\chr{a})^{\ast} {\color{black}\,\xrightarrow{?}\,} (1+\chr{a}\chr{a}^{\ast})^{\ast} \,\xrightarrow{1}\, \chr{a}^{\ast\ast} \,\xrightarrow{2}\, \chr{a}^{\ast}$$

\noindent
Here, it is not possible to build this chain of equalities using the available syntactic rewrite rules, so we would need a \textit{semantic} equality check to ``bridge the gap'' between 
$(1+\chr{a}^{\ast}\chr{a})^{\ast}$ and  $(1+\chr{a}\chr{a}^{\ast})^{\ast}$.

\subsection{Enumerative Bidirectional Rewriting}

This is the basic idea of
our enumerative bidirectional
rewriting approach.
We use a SyGuS-based strategy to enumerate candidate target regexes by increasing cost,
and adding them to the E-graph.
Equality saturation applies rewrites to the source and all targets simultaneously.
For any target whose E-class intersects the source's E-class, the E-graph will \textit{union} these E-classes.
We iterate through E-classes which are currently disjoint but potentially equal,
%to the source E-class (i.e., which contain syntactically-similar expressions to those in the source),
and try to equate these using a semantic equality check (NFA bisimilarity).
In this example, a successful equality check $\chr{a}^{\ast}\chr{a} = \chr{a}\chr{a}^{\ast}$ could result in
a new rewrite rule $a^{\ast}a \leftrightarrow a a^{\ast}$,
allowing equality saturation to ``bridge the gap'' indicated by ``${\color{black} \xrightarrow{?}}$''.

%\clearpage
\section{ReGiS: Rewrite-Guided Synthesis \pages{(6 pages)}}
\label{sec:regis}

%In this section, we formalize our rewrite-guided synthesis approach.
In this section, we formalize our rewrite-guided synthesis approach, and describe key properties of the algorithm.
In Section \ref{sec:simp}, we show in detail how our approach can be used to tackle the real-world problem of optimizing superlinear regular expressions.

%This section is the technical content.
%- Algorithms
%- Theorems/Proofs

\subsection{Expression Optimization}
\label{subsec:expr_opt}

We first specify the problem statement.
Let $G$ be a grammar, and let $\mathcal{E} = L(G)$ be
$G$'s \textit{language}, i.e., the set of expressions
that can be built from $G$.
Let $\mathit{height}: \mathcal{E} \rightarrow \mathbb{N}$ denote height of an
expression's tree.
Let $\mathit{subexprs}: \mathcal{E} \rightarrow \mathcal{P}(\mathcal{E})$ denote
subexpressions.% of an expression.

Let $\costfn: \mathcal{E} \rightarrow \mathbb{R}$ be a \textit{cost function}
that assigns a numeric cost to each expression.
Let $\semantics{\cdot} : \mathcal{E} \rightarrow D$ denote the \textit{semantics}
of the expression language, i.e., a function that maps expressions
to objects of some domain $D$,
and let $\bisim{}\,: (D \times D) \rightarrow \mathbb{B}$ be a
\textit{semantic equality} function for comparing objects in that domain.
Let $hl: \mathcal{E} \rightarrow \mathcal{P}(\mathcal{E})$ denote equivalent expressions of
equal or lesser height, i.e., 
$hl(e) = \{e' \in \mathcal{E} \,|\, \expheight{e'} \leq \expheight{e} \text{ and }\allowbreak \semantics{e'}\bisim{}\semantics{e}\}$.
Given a grammar $G$, we define a \textit{pattern} to be an expression
initially built from $G$, where zero or more subexpressions have been replaced with
\textit{variables} from a set $V$.
{Intuitively, variables serve as placeholders for arbitrary subexpressions built from $G$.}
If $p$ is a pattern, and $m: V \rightarrow \mathcal{E}$ is a mapping, we use $p[m]$ to denote
the expression formed by applying $m$ to each variable in $p$.
{Note that if $p$ contains no variables, $p[m]=p$ for all $m$, and if $\emptyset$ denotes the empty map, $p[\emptyset]=p$ for all $p$.}
We define a \textit{rewrite rule} to be an object of the
form $p_1 \rightarrow p_2$, where $p_1,p_2$ are patterns, and
a \textit{bidirectional} rewrite rule to be of the form
$p_1 \leftrightarrow p_2$.
We say rewrite rule $p_1 \rightarrow p_2$ \textit{matches}
$e$ if and only if there is a mapping $m: V \rightarrow \mathcal{E}$
such that $p_1[m] = e$, and in this case, we say
that $\rewrite{e, p_1 \rightarrow p_2} = \{p_2[m]\}$.
If $e$ does not match $p_1$, then $\rewrite{e, p_1 \rightarrow p_2} = \emptyset$.
If $E$ is a set of expressions and $W$ is a set of rewrite rules,
$\rewrite{E, W}$ signifies
$\bigcup_{e \in E, w \in W} \rewrite{e, w}$.

Let $W$ be a \textit{sound} set of rewrite rules,
i.e., for any $w \in W$,
if $e' \in \rewrite{e, w}$, then $\semantics{e'}\bisim{}\semantics{e}$.
An \textit{optimization instance} is a tuple %of the form
$(e,\costfn,W,\semantics{\cdot},\bisim{})$ where $e \in \mathcal{E}$,
and the \textit{optimization} problem consists of finding
a \textit{minimal} equivalent expression, i.e., an $e' \in \mathcal{E}$
such that $\semantics{e'}\bisim{}\semantics{e}$
and for any $e''$ where $\semantics{e''}\bisim{}\semantics{e}$,
we must have $\cost{e'} \leq \cost{e''}$.

\subsection{E-Graphs}
\label{subsec:lang}

Given an optimization instance, we 
encode the expression language $\mathcal{E}$ using 
the equality saturation framework Egg \cite{DBLP:journals/pacmpl/WillseyNWFTP21},
which accepts a straightforward s-expression-based formulation of the grammar.
Although Egg contains significant machinery to ensure that E-graphs are maintained compactly,
for our formalization purposes, we consider an \textit{E-graph} 
to be a mapping of the form $E: \mathcal{E} \rightarrow (\mathbb{N} \times \mathcal{E})$,
i.e., each contained expression $e$ within E-graph $E$ is associated with a numeric E-class identifier $E_{id}(e)$
and the minimum-cost expression $E_{min}(e)$ within that E-class.
We use $\mathit{class}(E, e)$ to denote the set of all expressions contained in the same E-class
as $e$.

\begin{figure*}[tb]
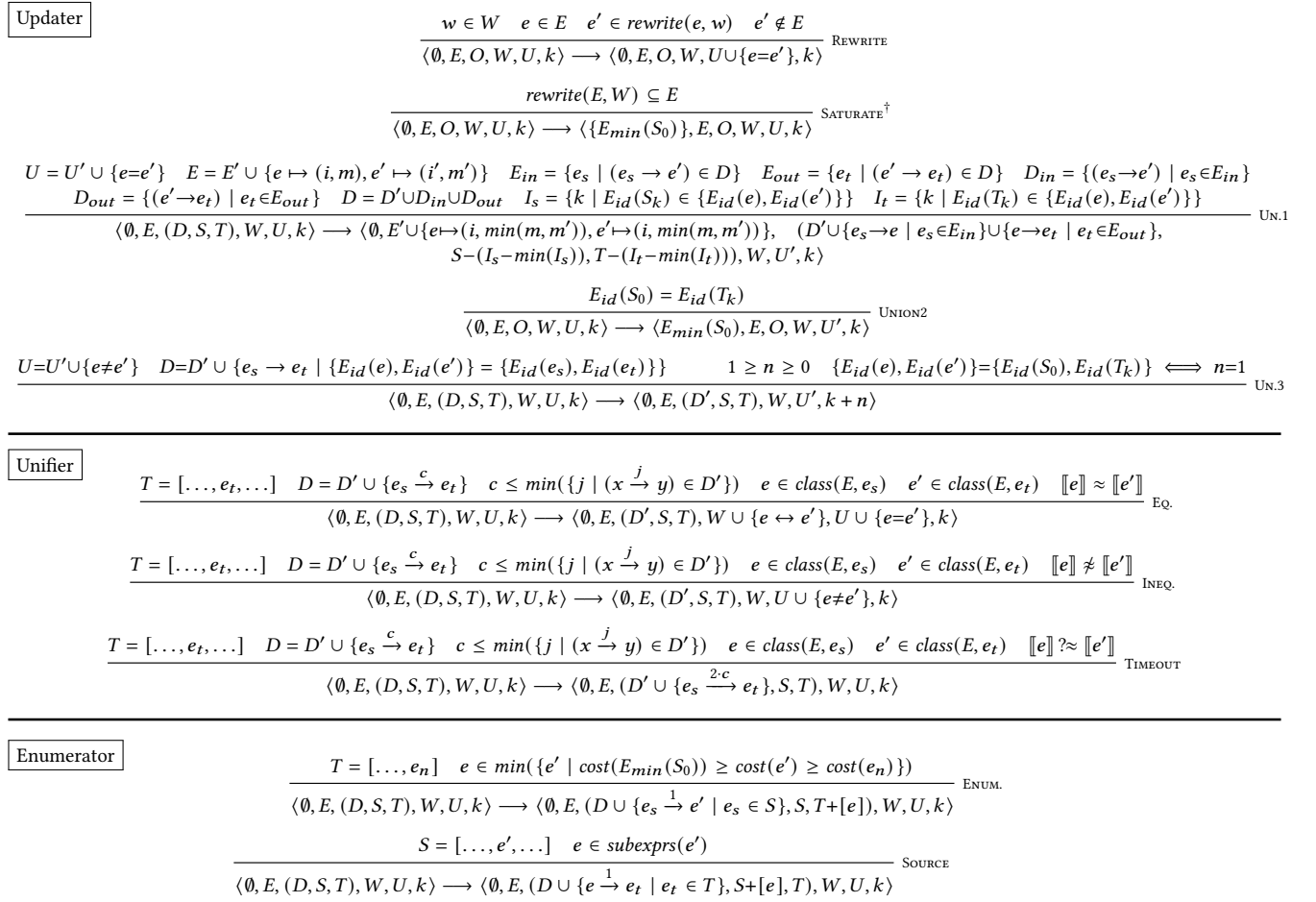

	\scalebox{0.99}
    {\small
\begin{mathpar}
\mprset{sep=1.00em}
	\fbox{Updater} \hfill \vspace{-2.5em} \\

%% REWRITE
	\inferrule*[Right={\scriptsize Rewrite},leftskip=16pt]{ 
w \in W \\
e \in E \\
e' \in \mathit{rewrite}(e, w) \\
e' \not\in E \\
}{ 
	\langle \emptyset, E, O, W, U, k \rangle \longrightarrow \langle \emptyset, E, O, W, U {\cup} \{e{=}e'\}, k \rangle
}
\vspace{-0.5em} \\

%% SATURATE
\inferrule*[Right={\scriptsize Saturate$^\dagger$},leftskip=32pt]{ 
\mathit{rewrite}(E, W) \subseteq E\\
}{ 
	\langle \emptyset, E, O, W, U, k \rangle \longrightarrow \langle \{E_{min}(S_0)\}, E, O, W, U, k \rangle
}
\vspace{-0.25em} \\

\inferrule*[Right={\scriptsize Un.1},leftskip=0.4em]{ 
U = U' \cup \{e{=}e'\} \\
%\{E_{id}(e),E_{id}(e')\} \not= \{E_{id}(S_0),E_{id}(T_k)\} \\
E = E' \cup \{e \mapsto (i,m), e' \mapsto (i',m')\} \\
E_{in} = \{ e_s \;|\; (e_s \rightarrow e')\in D \} \\
E_{out} = \{ e_t \;|\; (e' \rightarrow e_t)\in D \} \\
D_{in} = \{ (e_s {\rightarrow} e') \;|\; e_s {\in} E_{in} \} \\
D_{out} = \{ (e' {\rightarrow} e_t) \;|\; e_t {\in} E_{out} \} \\
D = D' {\cup} D_{in} {\cup} D_{out} \\
I_s = \{k \;|\; E_{id}(S_k)\in\{E_{id}(e),E_{id}(e') \}\} \\
I_t = \{k \;|\; E_{id}(T_k)\in\{E_{id}(e),E_{id}(e') \}\} \\
}{ 
\langle \emptyset, E, (D, S, T), W, U, k \rangle \longrightarrow \langle \emptyset, E'{\cup}\{e{\mapsto}(i,\mathit{min}(m,m')),e'{\mapsto}(i,\mathit{min}(m,m'))\},
\allowbreak \\  (D'{\cup} \{e_s {\rightarrow} e \;|\; e_s {\in} E_{in}\}{\cup} \{e {\rightarrow} e_t \;|\; e_t {\in} E_{out}\}, \allowbreak \\ S{-}(I_s{-}\mathit{min}(I_s)), T{-}(I_t{-}\mathit{min}(I_t))), W, U', k \rangle
}
\vspace{-0.5em} \\

\inferrule*[Right={\scriptsize Union2},leftskip=0em,rightskip=2.5em]{ 
%U {=} U' {\cup} \{e{=}e'\} \\
%\{E_{id}(e),E_{id}(e')\} {=} \{E_{id}(S_0),E_{id}(T_k)\} \\
	E_{id}(S_0) = E_{id}(T_k) \\
}{ 
	\langle \emptyset, E, O, W, U, k \rangle \longrightarrow \langle E_{min}(S_0), E, O, W, U', k \rangle
}
\vspace{-0.5em} \\

\inferrule*[Right={\scriptsize Un.3},leftskip=1em]{ 
U {=} U' {\cup} \{e{\not=}e'\} \\
D {=} D' \cup \{ e_s \rightarrow e_t \;|\; \{E_{id}(e),E_{id}(e')\}=\{E_{id}(e_s),E_{id}(e_t)\} \} \hspace{2em}\\
1 \geq n \geq 0 \\
\{E_{id}(e),E_{id}(e')\} {=} \{E_{id}(S_0),E_{id}(T_k)\} \iff n{=}1 \\
}{ 
	\langle \emptyset, E, (D,S,T), W, U, k \rangle \longrightarrow \langle \emptyset, E, (D',S,T), W, U', k+n \rangle
}
\vspace{-0.25em} \\

\HRule \vspace{-0.5em}\\
	\fbox{Unifier} \hfill \vspace{-2.0em} \\

%% EQUALITY -- unifier equality result
\inferrule*[Right={\scriptsize Eq.},leftskip=0em]{ 
	T = [\ldots, e_t, \ldots] \\
	D = D' \cup \{ e_s \xrightarrow{c} e_t \} \\
	c \leq \textit{min}(\{j \;|\; (x\xrightarrow{j}y) \in D'\}) \\
	e \in \mathit{class}(E, e_s)\\
	e' \in \mathit{class}(E, e_t)\\
	\semantics{e}\bisim{}\semantics{e'}
}{ 
	\langle \emptyset, E, (D, S, T), W, U, k \rangle \longrightarrow \langle \emptyset, E, (D', S, T), W\cup\{e \leftrightarrow e'\}, U\cup\{e{=}e'\}, k \rangle
}
\vspace{-0.5em} \\

%% INEQUALITY -- unifier inequality result
\inferrule*[Right={\scriptsize Ineq.},leftskip=1em]{ 
	T = [\ldots, e_t, \ldots] \\
	D = D' \cup \{ e_s \xrightarrow{c} e_t \} \\
	c \leq \textit{min}(\{j \;|\; (x\xrightarrow{j}y) \in D'\}) \\
	e \in \mathit{class}(E, e_s)\\
	e' \in \mathit{class}(E, e_t)\\
	\semantics{e}\not\bisim{}\semantics{e'}
}{ 
	\langle \emptyset, E, (D, S, T), W, U, k \rangle \longrightarrow \langle \emptyset, E, (D', S, T), W, U\cup\{e{\not=}e'\}, k \rangle
}
\vspace{-0.5em} \\

%% TIMEOUT -- unifier timeout result
\inferrule*[Right={\scriptsize Timeout},leftskip=3em]{ 
	T = [\ldots, e_t, \ldots] \\
	D = D' \cup \{ e_s \xrightarrow{c} e_t \} \\
	c \leq \textit{min}(\{j \;|\; (x\xrightarrow{j}y) \in D'\}) \\
	e \in \mathit{class}(E, e_s)\\
	e' \in \mathit{class}(E, e_t)\\
	\semantics{e}\,?\hspace{-3pt}\bisim{}\semantics{e'}
}{ 
	\langle \emptyset, E, (D, S, T), W, U, k \rangle \longrightarrow \langle \emptyset, E, (D'\cup\{e_s \xrightarrow{2\cdot c} e_t\}, S, T), W, U, k \rangle
}
\vspace{-0.25em} \\

\HRule \vspace{-0.25em}\\
	\fbox{Enumerator} \hfill \vspace{-2.0em} \\

%% ENUMERATE
\inferrule*[Right={\scriptsize Enum.},leftskip=2em]{ 
	T = [\ldots, e_n] \\
	e \in \mathit{min}(\{e' \;|\; \cost{E_{min}(S_0)} \geq \cost{e'} \geq \cost{e_n}\})\\
}{ 
	\langle \emptyset, E, (D, S, T), W, U, k \rangle \longrightarrow \langle \emptyset, E, (D\cup\{e_s \xrightarrow{1} e' \;|\; e_s \in S \}, S,T{+}[e]), W, U, k \rangle
}
\vspace{-0.5em} \\

%% SOURCE -- add a new source
\inferrule*[Right={\scriptsize Source},leftskip=0em]{ 
	S = [\ldots, e', \ldots] \\
	e \in \mathit{subexprs}(e')\\
}{ 
	\langle \emptyset, E, (D, S, T), W, U, k \rangle \longrightarrow \langle \emptyset, E, (D\cup\{e \xrightarrow{1} e_t \;|\; e_t \in T \}, S{+}[e], T), W, U, k \rangle
}

\end{mathpar}
}
\vspace{1em}
\caption{ReGiS as an abstract machine with $\longrightarrow$ denoting transitions.
Machine state is $\langle X, E, O, W, U, k\rangle$:
$X$ is a set containing a minimal regular expression upon termination; $E$ is the E-graph; $O = (D,S,T)$ is
the overlay graph with set of edges $D$, list of sources $S$, and list of targets $T$; $W$ is the set of rewrite rules;
$U$ is a set of (in)equalities to be processed; and $k$ is the index of the minimal unprocessed target in $T$. \hspace{5.0in}
$^\dagger$ {\sc Saturate} is used only in contexts where the rewrite rules have a completeness result.
}
\label{fig:regis_algo}
\end{figure*}

\subsection{ReGiS Algorithm}
\label{subsec:main-loop}

Figure \ref{fig:regis_algo} formalizes
ReGiS as an \emph{abstract machine} \cite{DBLP:journals/tcs/BerryB92}. 
A rule of the form $\frac{C}{S \longrightarrow S'}$ can be applied to step the machine state from $S$ to $S'$
if the condition $C$ is satisfied.
The algorithm terminates when no further steps can be taken.
A machine state is of the form $\langle X, E, O, W, U, k\rangle$,
where $X$ is a set used for storing the global
%
%
%\noindent
minimum (return value);
$E$ is the E-graph;
$O$ is a tuple $(D, S, T)$ representing the overlay graph, where $S$ and $T$ are the lists of source/target expressions respectively (overlay graph nodes), and $D$ is a set of weighted overlay graph edges;
$W$ is the set of rewrite rules;
$U$ is a set of expression (in)equalities to be incorporated into the E-graph; and
$k$ is the index of the minimal unprocessed target in $T$, i.e., lowest-cost target that has not yet been (in)equality-checked against
the source (in the Figure \ref{fig:overview} example, this would be the target at index 1).
Given optimization instance $(e,\costfn,W,\semantics{\cdot},\bisim{})$,
we use initial machine state $\langle \emptyset, E, (\emptyset, [e], \emptyset), W, \emptyset, 0 \rangle$, and run the machine until $X$ becomes non-empty, which causes the machine
to halt (the expression
contained in $X$ is the global minimum returned by the algorithm).
If the user prematurely terminates the machine,
we can output the current minimum $E_{min}(S_0)$, which may have lower cost than the source expression $S_0$,
but may not yet be the global minimum.
%Rules are partitioned into Updater, Unifier, and Enumerator based on functionality, which we detail
%in the following subsections.

\subsection{Updater}
\label{subsec:updater}

The Updater's functionality is described
in Figure \ref{fig:regis_algo}
by the {\sc Rewrite}, {\sc Saturate}, and {\sc Union{}$\textit{X}$} rules.
Conceptually, the Updater functions as a wrapper for a persistent instance of Egg's E-graph data structure,
which is denoted $E$.
{\sc Rewrite} allows a single rewrite rule $w$ that matches an expression $e$ to be applied, and adds
the resulting equality $e{=}e'$ to the set $U$ to be incorporated into the E-graph via the {\sc Union$\textit{X}$} rules.
{\sc Saturate} is for cases where the rewrite rules have a completeness result.
%(such as regular expressions).
This rule tests for a \textit{saturated} E-graph, i.e., in which none of the rewrite
rules change $E$---in this case, the algorithm can terminate (returning the global minimum $E_{min}(S_0)$), since
all possible rewrites have been explored.
The {\sc Union2} rule allows the algorithm to terminate when the minimal unprocessed target $T_k$
has been added to the source expression's E-class, since the enumeration order ensures that
$T_k$ will contain a globally-minimal expression.
{\sc Union1} incorporates an equality $e{=}e'$ into the E-graph $E$, by
(1) placing $e'$ into $e$'s E-class within $E$,
(2) updating the overlay graph edges $D$ by moving incoming/outgoing edges from $e'$ to $e$,
and (3) updating the overlay graph source/target lists by keeping only the \textit{lowest}
index belonging to the same E-class as $e$ or $e'$,
ensuring that the source ($S_0$) and
minimal unprocessed target ($T_k$) expressions are not absorbed into other sources/targets.
{\sc Union3} incorporates an inequality $e{\not=}e'$ obtained from the Unifier, by deleting any corresponding edges from the overlay graph.
If $e$ and $e'$ are contained in the source ($S_0$) and minimal unprocessed target ($T_k$) E-classes respectively (or vice versa),
{\sc Union3} additionally increments the index of the minimal unprocessed target $T_k$.

\subsection{Unifier}
\label{subsec:unifier}

Each spawned Unifier selects a minimal-weight source/target edge $e_s \xrightarrow{c} e_t$
from the overlay graph, and
performs a semantic equality check, as shown in
{\sc Equality}, {\sc Inequality}, and {\sc Timeout} in Figure \ref{fig:regis_algo}.
%When a Unifier is spawned, it is assigned to a specific pair of disjoint E-classes $C_1, C_2$.
%Disjointness means that either the equality saturation procedure has not yet
%applied the rule(s) needed to union these classes, or they simply are not equivalent.
%In either case, it is the Unifier's goal to make this determination via a semantic equality
%check, providing a complement to the syntactic rewriting occurring in the E-graph.
%
A single member from each class is selected ($e$ and $e'$),
and the equality check $\semantics{e}\bisim{}\semantics{e'}$ is performed.
If the check \textit{succeeds} ({\sc Equality}),
the Unifier records $e$ and $e'$ as needing to be unioned,
and a new rewrite rule $e \leftrightarrow e'$ is generated.
%Depending on the domain, the rewrite rule can be \textit{generalized} by replacing some subexpressions with variables.
%
If the check \textit{fails} ({\sc Inequality}),
the Unifier records inequality $e{\not=}e'$, which the Updater will use
to delete the overlay graph edge, ensuring that this particular equality check is not attempted again.
If the equality-checking procedure \textit{times out} ({\sc Timeout}), the Unifier increases the expense estimate of the edge
joining $e_s$ and $e_t$, ensuring that other potentially-easier equality checks are tried before
returning to this pair.
Intuitively, timeout of the equality check means that we do not (yet) know whether the two expressions are semantically equivalent.

\subsection{Enumerator}
\label{subsec:enumerator}

The Enumerator's goal is to iterate through expressions in order of increasing cost, and
add them as targets (along with new overlay graph edges), as
shown in the {\sc Enumerate} rule.
One key issue with in-order enumeration is the sheer number of expressions involved, which
can be in the millions for even depth-4 binary trees.
This causes slowdown of the Enumerator itself, and makes it unlikely that higher-cost
expressions will be reached in a reasonable amount of time.
Rewriting helps address this problem, by maintaining the \textit{minimum expression} $E_{min}(S_0)$ in
the source expression's E-class.
As this expression's cost gets reduced by rewriting, $\cost{E_{min}(S_0)}$ becomes the new \textit{upper bound}
for cost within the Enumerator, reducing the search space, and speeding up enumeration.

Our Enumerator is powered by the Z3 SMT solver \cite{DBLP:conf/tacas/MouraB08}.
We require the cost function to be representable in SMT using a decidable theory such as QF\_UFLIA (quantifier-free uninterpreted function symbols and linear integer arithmetic), and
we use uninterpreted function symbols to encode
expressions as trees up to a specified maximum height, representing the cost metric via
assertions that maintain the cost of each tree node.
Section \ref{subsec:smt_cost}  contains encoding details for regular expressions.

In addition to adding  targets, the Enumerator can also add additional source expressions ({\sc Source} rule), to facilitate equality checking against \textit{subexpressions} of the source.

\subsection{Properties of ReGiS}
\label{subsec:properties}

\def\thmsoundness{
    If ReGiS returns an expression $e'$ for an optimization instance
$(e,\costfn,\allowbreak W,\semantics{\cdot},\bisim{})$,
    then
    $e'$ is no larger than the minimal expression in $hl(e)$, i.e.,  
    %$e' \in hl(e)$ and
    $\cost{e'} \leq \cost{e''}$ for any $e''\in hl(e)$.
}
\begin{theorem}[Soundness]\label{thm:soundness}
	\thmsoundness
\end{theorem}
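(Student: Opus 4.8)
The plan is to track what invariants the abstract machine maintains on its state $\langle X, E, O, W, U, k\rangle$, and show that whenever $X$ becomes non-empty (the only way the machine halts and returns a value), the expression it contains has the claimed cost bound. There are exactly two rules that populate $X$: {\sc Saturate} and {\sc Union2}, both of which place $E_{min}(S_0)$ (resp. the minimal member of the source E-class) into $X$. So the argument splits into two cases, and in each case I must exhibit a witness for why no strictly cheaper equivalent-and-not-taller expression than the returned one exists.

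The key invariants I would establish, by induction on the transition sequence from the initial state $\langle \emptyset, E_0, (\emptyset,[e],\emptyset), W, \emptyset, 0\rangle$, are: (i) \emph{soundness of the E-graph} — every pair of expressions in a common E-class is semantically equal, which follows because $W$ is sound (given), {\sc Rewrite} only adds $e{=}e'$ with $e'\in\mathit{rewrite}(e,w)$, and {\sc Equality} only adds $e{=}e'$ after checking $\semantics{e}\bisim{}\semantics{e'}$, while {\sc Union1} merely propagates existing equalities; (ii) \emph{the source E-class stays tied to $e$} — $\mathit{class}(E,S_0)$ always consists of expressions semantically equal to $e$, and {\sc Union1}'s index-minimization clause is what guarantees $S_0$ is never absorbed and relabeled away; (iii) \emph{the enumeration invariant} — the targets $T_0,\dots,T_{k-1}$ that have already been (in)equality-checked against the source, together with the rewrite-reachable expressions, cover all equivalent expressions of cost at most $\cost{e_{k-1}}$ that have height $\le \expheight{e}$... more precisely, that the {\sc Enumerate} rule, which picks $e \in \mathit{min}(\{e' \mid \cost{E_{min}(S_0)} \ge \cost{e'} \ge \cost{e_n}\})$, enumerates candidate targets in non-decreasing cost order so that when target index $k$ is reached, every expression in $hl(e)$ strictly cheaper than $T_k$ has already been placed in the E-graph as some $T_j$, $j<k$, and either merged into the source E-class or ruled out by an inequality.

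For the {\sc Union2} case: the rule fires when $E_{id}(S_0)=E_{id}(T_k)$, i.e.\ the minimal unprocessed target has joined the source's E-class. By invariant (ii), $T_k$ is equivalent to $e$. By the enumeration invariant (iii) plus the facts that {\sc Inequality} deletes an edge only after a genuine $\not\bisim$ check and {\sc Union3} advances $k$ only past targets shown \emph{unequal} to the source, every equivalent expression of height $\le\expheight{e}$ with cost strictly below $\cost{T_k}$ has already been processed and found \emph{not} equal to $e$ — contradiction — so $T_k$ is of minimal cost among $hl(e)$, and $E_{min}(S_0)\le \cost{T_k}$ gives the bound. For the {\sc Saturate} case: the side condition $\mathit{rewrite}(E,W)\subseteq E$ means the E-graph is closed under $W$; since this rule is only used when $W$ is \emph{complete}, the source E-class then contains \emph{every} expression equivalent to $e$ (in particular every element of $hl(e)$), so $E_{min}(S_0)$ is minimal over all of them, which is even stronger than the stated bound. (One subtlety worth a remark: the height restriction to $hl(e)$ is exactly what makes the {\sc Union2} argument go through without completeness — the SMT-backed Enumerator only enumerates trees up to a fixed maximum height, so the "no cheaper witness" claim is only over bounded-height expressions; in the {\sc Saturate} case the bound is unconditional.)

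The main obstacle I expect is pinning down invariant (iii) rigorously: the {\sc Enumerate} rule's guard $e \in \mathit{min}(\{e' \mid \cost{E_{min}(S_0)} \ge \cost{e'} \ge \cost{e_n}\})$ ties the enumeration frontier to the \emph{dynamically shrinking} upper bound $\cost{E_{min}(S_0)}$, which is being lowered by {\sc Rewrite}/{\sc Union1} concurrently, so I must argue that lowering the bound never skips an expression that could have been a cheaper equivalent — this is fine because any expression above the current $\cost{E_{min}(S_0)}$ is automatically not cheaper than the value we will ultimately return — but stating it cleanly requires care about the interleaving of Enumerator and Updater steps. The rest (invariants (i) and (ii)) is a routine rule-by-rule induction.
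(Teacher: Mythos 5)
Your proposal is correct and follows essentially the same route as the paper's proof: a case split on the two terminating rules ({\sc Saturate} and {\sc Union2}), using completeness of $W$ in the former case and the cost-ordered enumeration plus inequality-driven advancement of $k$ in the latter. The invariants you make explicit (E-graph soundness, source-class preservation, and the enumeration-order invariant) are left implicit in the paper's argument, so your version is somewhat more careful about the same underlying structure.
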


\def\thmcompleteness{
    If $e_m$ is a minimal expression in $hl(e)$, i.e.,  
    $e_m \in hl(e)$ and
    $\cost{e_m} \leq \cost{e''}$ for any $e''\in hl(e)$,
    then ReGiS's result
    $e'$ will have $\cost{e'} = \cost{e_m}$.
}
\begin{theorem}[Completeness]\label{thm:completeness}
	\thmcompleteness
\end{theorem}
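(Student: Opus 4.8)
The plan is to obtain $\cost{e'}=\cost{e_m}$ by combining two bounds. Theorem \ref{thm:soundness}, applied with $e''=e_m$, already gives $\cost{e'}\le\cost{e_m}$ for any expression $e'$ that ReGiS returns, so it remains to show (i) that ReGiS does return a result, and (ii) that the returned $e'$ satisfies $\cost{e'}\ge\cost{e_m}$.

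I would handle (ii) first, since the same fact is reused in (i). The only rules that make $X$ non-empty are {\sc Saturate} and {\sc Union2}, and both place $E_{min}(S_0)$ --- the least-cost member of the source E-class --- into $X$. A routine induction on the length of the run shows that every member of $\mathit{class}(E,S_0)$ is semantically equivalent to $e$: the E-graph grows only via {\sc Rewrite} (sound because $W$ is sound) and via {\sc Union1}, and each equality that reaches {\sc Union1} through $U$ comes either from a sound rewrite or from a Unifier's successful check $\semantics{\cdot}\bisim{}\semantics{\cdot}$ in {\sc Equality}. Hence $E_{min}(S_0)\equiv e$, and --- provided its height is at most $\expheight{e}$ --- it lies in $hl(e)$, so $\cost{e'}\ge\cost{e_m}$ by minimality of $e_m$. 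Showing that $E_{min}(S_0)$, and more generally any cheapest $e$-equivalent expression ReGiS ever records, has height at most $\expheight{e}$ is the one genuinely delicate point: enumerated targets respect the bound by construction, but rewrite-introduced expressions need not, so this step must appeal to the height restriction ReGiS imposes when materializing rewrites (equivalently, to the property of $\costfn$ that no $e$-equivalent expression taller than $e$ is strictly cheaper than $e_m$).

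For (i) I would argue under the standard fairness assumption on the abstract machine --- every transition that remains enabled eventually fires --- via a chain of progress claims. By (ii), $\cost{E_{min}(S_0)}\ge\cost{e_m}$ holds throughout, so the running upper bound $\cost{E_{min}(S_0)}$ used by {\sc Enumerate} never drops below $\cost{e_m}$; since {\sc Enumerate} appends targets in nondecreasing cost order and there are only finitely many expressions of height at most $\expheight{e}$, after finitely many enumeration steps $T$ contains some minimal element $e_m$ of $hl(e)$, say at index $j$. Because $e_m\equiv e$, any representatives drawn from $\mathit{class}(E,S_0)$ and $\mathit{class}(E,T_j)$ are equivalent, so {\sc Inequality} can never fire on the edge $S_0\to T_j$ (it is never deleted; {\sc Timeout} only postpones it), whence by fairness some Unifier eventually executes {\sc Equality} on it and {\sc Union1} merges the two classes, giving $E_{id}(S_0)=E_{id}(T_j)$. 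Meanwhile the minimal-unprocessed-target index $k$ advances monotonically toward the lowest-indexed target equivalent to $e$: every cheaper target is non-equivalent to $e$ (otherwise it would be in $hl(e)$ and hence have cost $\ge\cost{e_m}$), so a Unifier eventually {\sc Inequality}-checks it and {\sc Union3} bumps $k$ past it, while targets absorbed by {\sc Union1} are simply dropped from $T$. Thus $k$ eventually reaches an index whose E-class coincides with that of $S_0$, {\sc Union2} becomes enabled, and the machine halts with a result of cost $\cost{e_m}$.

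The main obstacle is making the last part of (i) rigorous. The index $k$ and the list $T$ are mutated in interleaved ways by {\sc Union1} (which deletes all but the lowest-indexed target of each newly merged class) and {\sc Union3} (which independently increments $k$), so one must check that these interactions cannot cause $k$ to overshoot, to stall permanently before the source E-class is reached, or to chase a target whose index keeps shifting, and must combine this with the fairness argument to rule out an infinite run that enumerates and equality-checks forever without ever enabling {\sc Union2}. A secondary obstacle, shared with (ii), is the height-boundedness argument that keeps both $E_{min}(S_0)$ and the enumerator's running upper bound inside $hl(e)$.
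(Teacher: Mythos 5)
Your plan is essentially the paper's argument for the core of the theorem, but with a different (and more honest) decomposition. The paper's proof is exactly your part (i): {\sc Enumerate} eventually places $e_m$ in $T$ at some index $j$ with every earlier target non-equivalent to $e$; {\sc Equality} fires on the unit-weight edge and {\sc Union1} merges the classes; {\sc Inequality}/{\sc Union3} advance $k$ to $j$; {\sc Union2} terminates with $E_{min}(S_0)$ of cost at most $\cost{e_m}$. The paper stops there, which strictly speaking only proves $\cost{e'}\leq\cost{e_m}$, not the claimed equality; your part (ii) --- the invariant that every member of $\mathit{class}(E,S_0)$ is semantically equivalent to $e$ (soundness of $W$ plus soundness of the Unifier checks), so that $E_{min}(S_0)\in hl(e)$ and hence $\cost{e'}\geq\cost{e_m}$ --- is the missing half, and it is the right way to supply it. You also surface two assumptions the paper uses silently: a fairness condition on the nondeterministic abstract machine (without it nothing forces the {\sc Equality} step on the $S_0\to T_j$ edge to ever fire), and the height-boundedness of $E_{min}(S_0)$, which the paper addresses only for the regular-expression instantiation via Theorem~\ref{thm:heightcost} rather than in the generic framework. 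The one caveat is that your write-up is a proof plan with acknowledged open obligations (the $k$/$T$ interaction under {\sc Union1} versus {\sc Union3}, and the height argument) rather than a finished proof; but those obligations are real, the paper discharges them no more rigorously than you do, and your skeleton is sound.
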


\noindent
The proofs of these theorems appear in Appendix \ref{subsec:proofs}.

%\clearpage
\section{Regular Expression Optimization\pages{(5 pages)}}
\label{sec:simp}
\label{sec:redos}

In Section \ref{sec:regis}, we formalized the ReGiS framework, and in this section,
we highlight the flexibility and practicality of our approach by using it to solve
the important and insufficiently-addressed problem of optimizing superlinear regular expressions.
%ReDoS attack prevention.

\subsection{Regular Expression Preliminaries}
\label{subsec:regex}
\label{subsec:prelim}
\label{subsec:regex_prelim}

{While $\S$\ref{subsec:expr_opt} discussed expressions generally, we will now focus specifically on \textit{regular expressions}.}
Regular expressions are a classic formalism providing a compositional syntactic approach for describing
\textit{regular languages}, and are useful for tokenizing input streams (e.g., in a lexer),
pattern matching within text, etc. In software development practice, the term ``regular expression''
is often overloaded to refer to a variety of pattern-matching capabilities and syntaxes,
so it is important to fix this definition for our work.
We say that an expression $R$ is a \textit{regular expression} if and only if
it matches the following grammar
(we will use the term \textit{regex} when referring to pattern-matching expressions beyond this core language).

\plstxset{or=\big\vert,or skip=2.5pt}
{
\begin{plstx}
(regular expression)         : R ::= 0 | 1 | c | R+R | R\cdot R | R^{\ast} \\
	*(character from alphabet)      : c [\in] \mathcal{A} \\
\set{or skip=1pt,or=\,}
\set{or=\vert}
\end{plstx}}

\noindent
An expression $R_1{\cdot}R_2$ is often written as $R_1 R_2$. Semantics can
be defined in terms of the language each regular expression recognizes.

\begin{displaymath}
\begin{array}{lll}
	\lang{0} & = & \emptyset \\
	\lang{1} & = & \{ \epsilon \} \\
	\lang{c} & = & \{ c \} \\
	\lang{R_1+R_2} & = & \lang{R_1} \cup \lang{R_2} \\
	\lang{R_1\cdot R_2} & = & \{ s_1 s_2 \;|\; s_1 \in \lang{R_1} \text{ and } s_2 \in \lang{R_2} \}\\
	\lang{R^{\ast}} & = & \bigcup_{k=0}^\infty \lang{R^k} \\
\end{array}
\end{displaymath}

\noindent
We use $R^k$ to mean $\overbrace{R{\cdot}R{\cdot}R{\cdots}}^{k}$ for $k > 0$, and $R^0 = 1$.
The above semantics tells us that
$0$ recognizes no strings,
$1$ recognizes the empty string $\epsilon$,
character $c$ recognizes the corresponding single-character string,
alternation $+$ recognizes the union of two languages,
concatenation $\cdot$ recognizes string concatenation,
and iteration (Kleene star) $\ast$ recognizes repeated concatenation.

\subsection{Regular Expression Semantic Equality}
\label{subsec:regex_equality}

To perform the Section \ref{subsec:unifier} (Unifier) semantic equality check $\bisim{}$ for regular expressions
$R_1$ and $R_2$, we must decide whether $\lang{R_1}  = \lang{R_2}$, i.e., whether they describe the same language.
Using Thompson's construction, we can efficiently convert $R$ to an NFA $N(R)$ which recognizes the language $L(R)$, and this result allows us to
instead focus on the equality check $L(N(R_1)) = L(N(R_2))$.
NFA equality is PSPACE-complete \cite{DBLP:conf/popl/MayrC13}, but we utilize a
bisimulation-based algorithm which has been shown to be effective in many cases \cite{DBLP:journals/jalc/AlmeidaMR10, DBLP:conf/setta/FuDJ017}.

\subsection{Regular Expression Syntactic Rewriting}
\label{subsec:regex_rules}
\label{subsec:regex_rewrite}

Regular expressions have a mathematical
formalization known as 
\textit{Kleene algebra}.
Due to soundness and completeness properties, equality of regular expressions can be fully
characterized by a set of Kleene algebra axioms.
These axioms leave us with two equally powerful
ways checking equality of regular
expressions $R_1, R_2$:
we can either check whether they describe the same language, i.e.,
$L(R_1)=L(R_2)$ (semantic equality) as discussed in Section \ref{subsec:regex_equality}, or we can 
check whether there is a \textit{proof} using the
Kleene algebra axioms showing that $R_1 = R_2$ (syntactic equality).

Following \citet{DBLP:conf/lics/Kozen91}, we list the Kleene algebra axioms as
follows, where $R_1 \leq R_2$ is shorthand for $R_1 + R_2 = R_2$.
{%\small
$$
\begin{array}{r@{\hskip 1em}l@{\hskip 1em}l@{\hskip 2em}ll}
	A + (B + C) & = & (A+B)+C & \text{associativity of $+$} & (1) \\
	A+B & = & B+A & \text{commutativity of $+$} & (2) \\
	A+0 & = & A & \text{identity for $+$} & (3) \\
	A+A & = & A & \text{idempotence of $+$} & (4) \\
	A\cdot(B\cdot C)& = & (A\cdot B)\cdot C & \text{associativity of $\cdot$} & (5) \\
	1\cdot A & = & A & \text{left identity for $\cdot$} & (6) \\
	A\cdot 1 & = & A & \text{right identity for $\cdot$} & (7) \\
	A\cdot(B+C) & = & A{\cdot} B + A {\cdot} C & \text{left distributivity of $\cdot$} & (8) \\
	(A+B)\cdot C & = & A{\cdot} C + B {\cdot} C & \text{right distributivity of $\cdot$} & (9) \\
	0\cdot A & = & 0 & \text{left annihilator for $\cdot$} & (10) \\
	A\cdot 0 & = & 0 & \text{right annihilator for $\cdot$} & (11) \\
	1+A{\cdot}A^{\ast} & \leq & A^{\ast} & \text{left unrolling of $\ast$} & (12) \\
	1+A^{\ast}{\cdot}A & \leq & A^{\ast} & \text{right unrolling of $\ast$} & (13) \\
	B + A{\cdot}X \leq X & \Rightarrow & A^{\ast}{\cdot}B \leq X & \text{left unrolling ineq.} & (14) \\
	B + X{\cdot}A \leq X& \Rightarrow & B{\cdot}A^{\ast} \leq X & \text{right unrolling ineq.} & (15) \\
\end{array}
$$
}
% https://www.cs.cornell.edu/courses/cs786/2004sp/Lectures/l02-axioms.pdf

We can derive the following useful rules from the above axioms. These rules are
extraneous due to the soundness and completeness of 1-15, but they offer several
ways of quickly eliminating Kleene stars, which
as seen in Section \ref{subsec:regex_metric}, are a primary contributor toward
superlinear behavior.
{%\small
$$
\begin{array}{r@{\hskip 1em}l@{\hskip 1em}l@{\hskip 2em}ll}
	1+A{\cdot}A^{\ast} & = & A^{\ast} & \text{strong left unrolling of $*$} & (16) \\
	1+A^{\ast}{\cdot}A & = & A^{\ast} & \text{strong right unrolling of $*$} & (17) \\
	A^{\ast}\cdot A^{\ast}& = & A^{\ast} & \text{idempotence of $*$} & (18) \\
	A^{\ast\ast}& = & A^{\ast} & \text{saturation of $*$} & (19) \\
	1^{\ast} & = & 1 & \text{iterated identity} & (20) \\
	0^{\ast} & = & 1 & \text{iterated annihilator} & (21) \\
\end{array}
$$
}
We encode equations 1-11, 16-17 (the stronger forms of 12-13), and 18-21 as rewrite rules---for each equation
$X = Y$, we produce the bidirectional rewrite rule $X \leftrightarrow Y$,
{and instantiate our Updater (\ref{subsec:updater}) with these rules.}
Note that equations 14-15 are not equalities like the others, meaning they are not readily
usable as rewrite rules. We omit these equations, noting that the ReGiS approach fully
supports incomplete sets of rewrite rules.
%without a completeness property.

\subsection{Backtracking Regular Expression Matching Algorithms}
\label{subsec:backtracking}

%As discussed in Section \ref{sec:intro},
Catastrophic backtracking behavior arises due to the \textit{nondeterminism} in the NFAs corresponding to regular expressions.
Figure \ref{fig:superlinear} showed a quantitative example of the PCRE engine's exponential behavior
on the regular expression
$\chr{a}^{\ast\ast}$ and input strings
$\chr{a}\chr{a}\ldots\chr{a}\chr{b}$.
In Figure \ref{fig:nfa}, we visualize how this occurs, by examining the paths taken through NFA $N(\chr{a}^{\ast\ast})$
as the PCRE matching algorithm attempts to match,
and observing how the number of iterations increases exponentially.
To match against the string $\chr{b}$, all transitions to the \textit{left} of computation tree node $t_1$ in Figure \ref{fig:nfa}(b)
must be explored, before concluding that $\chr{b}$ is not accepted
($7$ transitions).
Note that backtracking algorithms use lightweight \textit{memoization} to handle cycles in the NFA (skipped
transitions due to this behavior are indicated with dashed lines) \cite{RussCox}.
To match
\begin{figure}
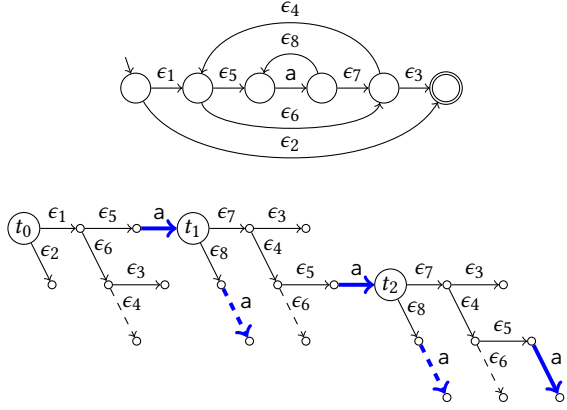

%\begin{wrapfigure}{r}{0.66\textwidth}
%\vspace{-2em}
%\begin{figure*}[h!]
	\centering
\tikzfig{figures/a-star-star-nfa}\hspace{0pt}\tikzfig{figures/a-star-star-tree}
	\caption{(a) NFA for regular expression $\chr{a}^{\ast\ast}$, and (b) its (partial) computation tree.}
	\label{fig:nfa}
%\end{figure*}
%\vspace{-4em}
%\end{wrapfigure}
\end{figure}
%
%$\chr{a}\chr{a}\ldots\chr{a}\chr{b}$,
against the string $\chr{a}\chr{b}$, all transitions to the left of node $t_2$ must be explored
($15$ transitions), and to match against $\chr{a}\chr{a}\chr{b}$, all transitions in Figure \ref{fig:nfa}(b)
must be explored ($23$ transitions).
Here, superlinear behavior can be triggered
%ReDoS could be attempted
by a large sequence of
repeated $\chr{a}$ characters followed by a non-\chr{a} character.
In practice, one can automatically derive such an \textit{attack string} to
exploit a given superlinear regular expression \cite{Shen}.

\subsection{Cost Metric for Superlinear Regular Expressions}
\label{subsec:regex_metric}

%As described in Section~\ref{sec:intro},
An upper
bound on the maximum backtracking for a given regular expression
can be characterized by \textit{tree width (leaf size)} \cite{palioudakis2015quantifying,DBLP:conf/dcfs/CampeanuS15},
which describes the number of leaves in the tree
consisting of all possible paths through
the regular expression's NFA, but this is
hard to compute (PSPACE-complete) \cite{DBLP:conf/dcfs/CampeanuS15}.
In Figure \ref{fig:nfa}, tree width would be $10$ with respect to the
input string $\chr{a}\chr{a}\chr{b}$, since this (depth-3) computation tree has $10$ leaves.
Alternative metrics such as maximal backtracking
run \cite{Weideman} quantify potential backtracking in different
ways, but are also computationally expensive.
We introduce a useful cost
metric which we call \textit{backtracking factor}
that is quick to compute directly on a regular
expression, yet still captures the key
syntactic features causing superlinearity
in backtracking search.
%---in terms of cost, regular expression \textit{alternation} is
%additive, \textit{concatenation} is multiplicative, and \textit{iteration} % Kleene star
%is exponential.
This backtracking factor allows ordering of regular expressions
according to ``degree of superlinearity'',
e.g., $\cost{\chr{a}} < \cost{\chr{a}^{\ast}} < \cost{\chr{a}^{\ast\ast}}$.
%$\cost{\chr{a}}=1$, $\cost{\chr{a}^{\ast}}=K{\cdot}\cost{\chr{a}}=45$,
%$\cost{\chr{a}^{\ast\ast}}=K{\cdot}\cost{\chr{a}^{\ast}}=2025$,
%where $K$ is a constant calculated to be exponentially large with respect to
%the maximum regular expression size. In this case,
%where regular expressions have a maximum height of 2, we use $K=45$.
%See Section \ref{subsec:regex_metric} for more details.

The following shows our cost metric, where
$K_1 = |\mathcal{A}|\times(2^h - 1)$
and
$K_2 = {K_1}^h \times (K_1+2)$ are integer scaling
factors for $\cdot$ and $\ast$ respectively, $h$ is the maximum expression
height being used in the Enumerator, and $|\mathcal{A}|$ is the size
of the regular expression's alphabet.

\begin{displaymath}
\begin{array}{lll}
	\cost{0} & = & 1 \\
	\cost{1} & = & 1 \\
	\cost{c} & = & 1 \\
	\cost{R_1+R_2} & = & \cost{R_1} + \cost{R_2} \\
    \cost{R_1\cdot R_2} & = & K_1 \times (\cost{R_1} + \cost{R_2}) \\
    \cost{R^{\ast}} & = & K_2 \times \cost{R} \\
\end{array}
\end{displaymath}

This metric has the effect of ensuring that, for regular expressions up to a maximum height
$h$, the $+$ operator increases cost \textit{additively},
the $\cdot$ operator increases cost \textit{multiplicatively}, and $\ast$ increases cost \textit{exponentially}.
{We instantiate our Enumerator (Section \ref{subsec:enumerator}) with this specific cost metric, and}
in Section \ref{sec:eval}, we experimentally validate the quality of this metric for
characterizing superlinear behavior.

The following property of \costfn{} says that for the lowest-cost expression $R_2$ seen so far,
the regular expression with \textit{globally} minimal cost will have height no greater than
$\expheight{R_2}$.
This ensures that we can soundly reduce the Enumerator's \textit{height} bound
based on the height of the current lowest-cost expression in the source expression's E-class,
which improves enumeration performance.
%The proof utilizes the following lemma.

\def\thmheightcostpre{
Consider regular expressions $R_1,R_2$.
If $L(R_1) \allowbreak = L(R_2)$, \allowbreak $\expheight{R_1}\allowbreak >\expheight{R_2}$,}
\def\thmheightcostpost{and $\cost{R_1} \allowbreak\leq \cost{R_2}$,
then $\exists\, R'$ such that $L(R')=L(R_2)$, $\expheight{R'}\leq \expheight{R_2}$, and
$\cost{R'} \leq \cost{R_1}$.
}
\def\lemheightcost{
	\thmheightcostpre{} $|\{R \;|\; L(R_1)=L(R)=L(R_2) \text{ and } \expheight{R_1} > \expheight{R} > \expheight{R_2}\}| = 0$,
\thmheightcostpost
}
\def\thmheightcost{
	\thmheightcostpre{} \thmheightcostpost
}

\begin{theorem}[Height vs. Cost]\label{thm:heightcost}
\thmheightcost
\end{theorem}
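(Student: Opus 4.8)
The plan is to reduce the statement to the special case where $\expheight{R_1}$ and $\expheight{R_2}$ differ by exactly one, which is precisely the situation of the companion lemma (whose extra hypothesis ``no equivalent expression of strictly intermediate height'' forces a gap of one, since idempotence of alternation realizes every height $\geq \expheight{R_2}$: $R_2{+}R_2$ has height $\expheight{R_2}{+}1$ and the same language, $R_2{+}(R_2{+}R_2)$ has height $\expheight{R_2}{+}2$, etc.). Granting the gap-one case, the general statement follows by induction on $\expheight{R_1}-\expheight{R_2}$. If the gap exceeds one, let $\tilde R$ be $R_2$ padded by nested idempotent alternations up to height $\expheight{R_1}-1$; then $L(\tilde R)=L(R_2)$, $\expheight{R_1}>\expheight{\tilde R}>\expheight{R_2}$, and since each padding level contributes $\cost{R_2}$ additively we get $\cost{\tilde R}\geq\cost{R_2}\geq\cost{R_1}$, so the gap-one case applied to $(R_1,\tilde R)$ yields an equivalent $R'$ with $\expheight{R'}\leq\expheight{R_1}-1$ and $\cost{R'}\leq\cost{R_1}$. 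If $\expheight{R'}\leq\expheight{R_2}$ we are done; otherwise the induction hypothesis applied to $(R',R_2)$ (legitimate, since $\cost{R'}\leq\cost{R_1}\leq\cost{R_2}$ and the new gap is strictly smaller) finishes the argument.

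For the gap-one case, write $h=\expheight{R_2}$, so $\expheight{R_1}=h{+}1$. First I would record two elementary facts about $\costfn$: the maximum cost of an expression of height $\leq h$ is $K_2^{\,h-1}$ (the Kleene-star multiplier $K_2$ dominates $K_1$ and $2$, so the extremal expression is $c^{\ast\cdots\ast}$), and the minimum cost of an expression of height exactly $h$ is $h$ (a left-nested alternation of single characters is extremal, since a root-to-leaf path of length $h$ already forces cost $\geq h$). Hence $\cost{R_1}\leq\cost{R_2}\leq K_2^{\,h-1}$. Because the strictly shorter $R_2$ is equivalent to $R_1$, the expression $R_1$ is not of minimal height, and the goal is to exhibit a cost-non-increasing rewrite of $R_1$ to an equivalent expression of height $\leq h$. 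I would case on the top operator of $R_1$. The ``easy'' redundancies are removed by rewrites that strictly lower cost: $S^{\ast\ast}\to S^{\ast}$, $S{+}S\to S$, $0{+}S\to S$, $1{\cdot}S\to S$, $S{\cdot}1\to S$, re-association of a $\cdot$- or $+$-chain into balanced form, and the height-reducing direction of distributivity $S{\cdot}(A{+}B)\to S{\cdot}A+S{\cdot}B$; the two cost bounds above, together with the specific $K_1=|\mathcal{A}|(2^h{-}1)$ and $K_2=K_1^{\,h}(K_1{+}2)$, are exactly what make each of these non-increasing. The residual work is to rule out a ``genuine'' (non-collapsible) top-level $\ast$ or $\cdot$: a non-redundant star over a height-$h$ body costs $\geq K_2\cdot h$, which for $h=2$ already exceeds $K_2^{\,h-1}=K_2$, a contradiction; for $h\geq 3$ one shows that a non-collapsible height-$(h{+}1)$ expression realizes enough distinct short strings that no height-$h$ expression can describe $L(R_1)$, contradicting the existence of $R_2$.

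The main obstacle is this last part --- the structural case analysis for a top-level $\ast$ or $\cdot$ that resists the simple rewrites. Handling it cleanly requires (i) a small-model argument bounding the set of short strings that a height-$h$ expression can realize, so that a ``cheap but too tall'' non-collapsible expression cannot have a height-$h$ equivalent --- this is where the shape of $K_1=|\mathcal{A}|(2^h{-}1)$, an upper bound on alphabet size times the node count of a height-$h$ tree, is used --- and (ii) verifying that every height-collapsing rewrite is cost-non-increasing under the stated scaling constants, which is a finite but delicate calculation, with concatenation the awkward case since it multiplies cost only by $K_1$ rather than by $K_2$. I would also check the degenerate grammar cases ($0$ or $1$ as the body of a star or as a factor of a product) separately, and confirm that the idempotent-alternation padding used in the reduction raises the height by exactly one per level (it does, since $R_2$ is attached as the shorter child of each new $+$).
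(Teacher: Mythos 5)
Your outer reduction is sound and is, in fact, a tidier decomposition than the paper's: you correctly observe that idempotent padding ($R_2{+}R_2$, $R_2{+}(R_2{+}R_2)$, \ldots) realizes every intermediate height with the same language and with cost at least $\cost{R_2}\geq\cost{R_1}$, so the side condition of Lemma~\ref{lem:heightcost} really does force $\expheight{R_1}=\expheight{R_2}+1$, and your induction on the height gap then works. The paper instead inducts on the number of intermediate-height expressions and discharges the base case via the lemma, whose proof \emph{posits a rewrite path} from $R_1$ to $R_2$ and does a case analysis on the height-reducing Kleene rules; you avoid that assumption (which is delicate, given that the paper's rule set is explicitly incomplete), but at the price of having to establish the gap-one case semantically from scratch.

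That is where there is a genuine gap: the gap-one case --- the entire mathematical content of the theorem --- is only sketched, you flag the ``main obstacle'' as unresolved, and the step you propose to close it is false as stated. It is not true that a height-$(h{+}1)$ expression resisting the listed collapsing rewrites ``realizes enough distinct short strings that no height-$h$ expression can describe $L(R_1)$'': the expression $(\chr{a}^{\ast}\chr{b}^{\ast})^{\ast}$ has height $4$, is untouched by $S^{\ast\ast}{\to}S^{\ast}$, $S{+}S{\to}S$, identity/annihilator removal, re-association, and distributivity, yet equals the height-$3$ expression $(\chr{a}{+}\chr{b})^{\ast}$. (This pair does not refute the theorem, since $\cost{(\chr{a}^{\ast}\chr{b}^{\ast})^{\ast}}>\cost{(\chr{a}{+}\chr{b})^{\ast}}$, but it shows that any correct argument must invoke $\cost{R_1}\leq\cost{R_2}$ throughout rather than structural non-collapsibility; your sketch uses the cost bound only for $h=2$.) The companion claim that every height-collapsing rewrite is cost-non-increasing is likewise asserted rather than verified, and at least one entry is wrong as written: $S{\cdot}(A{+}B)\to S{\cdot}A+S{\cdot}B$ duplicates $S$ and so \emph{strictly increases} cost (and does not reduce height); only the factoring direction behaves as you need. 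Until the gap-one case is actually carried out, the proposal is a plausible plan, not a proof.
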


\begin{figure}[t]
\centering
	\begin{minipage}{0.4\linewidth}
\begin{displaymath}
\begin{array}{lll}
	 & \vdots &  \\
	\cost{c} & = & 1 \\
	\cost{R_1+R_2} & = & \cost{R_1} + \cost{R_2} \\
	 & \vdots &  \\
\end{array}
\end{displaymath}
	\end{minipage}
	\caption{Example cost function.}
	\label{fig:smt_cost_function}
    \hspace{-1.5in}
	\begin{minipage}{0.6\linewidth}
	{%\small
	\begin{lstlisting}[mathescape,numbers=none,lineskip=0.0em,basicstyle={\scriptsize\ttfamily}]
(assert (= (ncost $N$)
	   (ite (= (ntype $N$) $\mathit{CHAR}$) 1
	   (ite (= (ntype $N$) $\mathit{PLUS}$) (+ (ncost $(\mathit{left}\; N)$) (ncost $(\mathit{right}\; N)$))
	   ...))))
\end{lstlisting}
	}
	\end{minipage}
	\caption{SMT encoding.}
	\label{fig:smt_encoding}
\end{figure}

\subsection{SMT Implementation of Cost Metric}
\label{subsec:smt_cost}

We allow the cost metric $\costfn$ to be specified as a recursive integer-valued function using addition as well as multiplication by integer constants
(the Section \ref{subsec:regex_metric} cost metric is of this form). This allows us to implement $\costfn$ in Egg as a recursive Rust function, used
for extracting the minimal expression from an E-class.
Additionally, this allows us to implement $\costfn$ in SMT using the QF\_UFLIA theory, to enable enumerating regular expressions by increasing cost, as needed by the Enumerator (Section \ref{subsec:enumerator}).

Our SMT encoding uses uninterpreted function symbols $\mathit{ntype}: \mathbb{N} \rightarrow \mathbb{N}$ and $\mathit{ncost}: \mathbb{N}\rightarrow\mathbb{N}$,
representing the type and cost of each node in the expression's tree.
Figures \ref{fig:smt_cost_function}-\ref{fig:smt_encoding} demonstrate
how our regular expression cost function is encoded using these function symbols.
The uppercase symbols in the SMT encoding signify integer constants---e.g., $\mathit{CHAR}$ identifies a character expression node type, and
$\mathit{PLUS}$ identifies an alternation expression node type. We generate one such
assertion for each node index $N$, up to the bounds given by the source expression's height.
Integer constants $(\mathit{left}\; N)$ and $(\mathit{right}\; N)$ are the indices of the nodes corresponding to node $N$'s left and right subexpressions respectively.

In the model obtained from the solver, the $\mathit{ntype}$ function symbol encodes the expression itself,
and $(\mathit{ncost}\; 0)$ contains the expression's cost (index 0 corresponds to the expression's root node).

\section{Prototype and Evaluation \pages{(4 pages)}}
\label{sec:eval}

We built a prototype of ReGiS ($\S$\ref{sec:regis}), and leveraged it to build a regular expression optimization
%ReDoS attack prevention
system ($\S$\ref{sec:redos}) 
using $\sim$7500 lines of Rust code and several hundred lines of Python/shell script.

The platform used for all experiments was a Dell OptiPlex 7080 workstation running Ubuntu 18.04.2,
with a 10-core (20-thread) Intel i9-10900K CPU (3.70GHz), 128 GB DDR4 RAM,
and a 2TB PCIe NVME Class 40 SSD.
We examine three key research questions Q1-3,
to understand the performance and usability of our approach.
{Note that we have proven the correctness of our algorithm (Section \ref{subsec:properties}), but to confirm that the implementation is bug-free, we used the equality checking procedure (Section \ref{subsec:regex_equality}) to successfully check the correctness of each result produced by our tool in the experiments.}

\subsection*{Q1: How does ReGiS compare against SyGuS and rewriting?}

We demonstrate the benefits of \textit{combining} enumerative synthesis
and rewriting, by comparing ReGiS performance against each of these approaches operating on
their own.
We used the Enumerator to simply iterate through candidate expressions
in increasing order of cost, performing an NFA equality check against the source expression
for each candidate.
The input source expressions consisted of all possible regular expressions with a single-character
alphabet, up to height 3, for a total of 2777 inputs.
Figure \ref{fig:exp01} shows these performances results.

Figure \ref{fig:exp01}(a) contains many overlapping points, so
we binned the data (bin size $0.1$)
and used size/color of the points to indicate relative numbers of tests appearing at those locations.
The diagonal indicates 1x speedup, so points appearing \textit{above} this line indicate better
performance for ReGiS versus basic enumeration.
Figure \ref{fig:exp01}(b) visualizes the data differently, showing the spectrum of \textit{speedups}
offered by ReGiS. Each bar represents the number of tests
in which ReGiS had the speedup shown on the $x$ axis, e.g.,
$222$ tests had a speedup of $30$x.

The bar(s) to the left of $x=0$ contain $821$ examples. Of these, $644$ are due to timeout of both \textit{both} ReGiS and
basic enumeration (time limit $3$s), and the remaining $177$ were instances where ReGiS was slower.
Of these, the average slowdown was $1.06$x, and the maximum slowdown was $2$x.
Only $10$ cases were worse than $1.15$x slowdown, and in each of these cases, the total runtime of ReGiS
was less than $110$ms.
We are confident that if the timeout were increased slightly, many of the $644$ examples would show speedup for ReGiS.

We also added an Egg-only mode to enable rewriting without enumeration. Here,
we simply added the input source expression to the E-graph, and applied rewrites
until Egg indicated saturation had been reached.
All of the input regular expressions timed out at $3$s using this rewriting-only mode.

\begin{figure}[tb]
\centering
{\includegraphics[trim = 0in 0in 0in 0in, clip,width=0.75\linewidth]{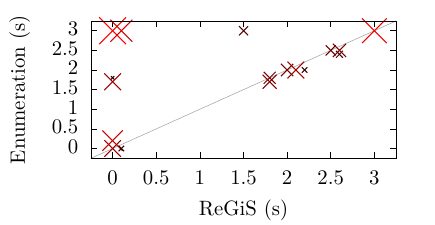}}\\%
{\includegraphics[trim = 0in 0in 0in 0in, clip,width=0.75\linewidth]{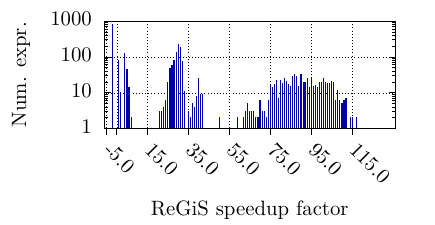}}
\caption{Performance of ReGiS against basic bottom-up enumeration.}
\label{fig:exp01}
\end{figure}

\subsection*{Q2: How much does the \emph{interplay} between enumeration and rewriting help?}

One key question is whether enumeration and rewriting could
be decoupled while still obtaining the same results. For example, we could first let
Egg perform some rewriting, and after we notice that no further reduction in cost seems to be
occurring, terminate Egg, and begin enumerating based on the best-cost expression so far.
If the performance of this approach were comparable to ReGiS, that would mean our work's
unique interaction between rewriting and synthesis may be less important than expected.
We set up a simple experiment similar to the alternation example described in Section \ref{subsec:limitations}, 
consisting of depth-4 regular expressions each having $5$ distinct
characters, and using only alternation. In this case, (1) the expressions were
not reducible, i.e., we must always enumerate up to the maximum depth to confirm
the global minimum has been found, and (2) there were many \textit{equivalent} expressions
involved in enumeration, due to commutativity of alternation.

\begin{table}[t]
	\centering
	\caption{Reduction in needed semantic equality checks.}
	\label{table:q2}
\begin{tabular}{|l||lc|lc|} \hline
	& \textbf{ReGiS} & &\textbf{Enumeration}&	\\\hline\hline
		& Checks	& Runtime (s)	& Checks	& Runtime (s) \\\hline
	 Min.	& 3067	& 11.80	& 3416	& 11.09 \\
	 Mean	& 3134.44	& 12.11	& 3416	& 11.40 \\
	 Max.	& 3235	& 12.33	& 3416	& 11.86\\\hline
\end{tabular} \\\vspace{1em}
    \vspace{-1em}
\end{table}

Table \ref{table:q2} shows the results on a benchmark set
of $100$ depth-4 regular expressions. There was reduction 
in the number of \textit{equality checks} needed by ReGiS---%
this is because the E-graph is continuously unioning candidate expressions added
from the
Enumerator, meaning that by the time a Unifier would be spawned for a given pair
of E-classes, they may have already been handled by rewriting.
Total ReGiS runtime was slightly worse on these particular examples,
due to overhead involved in maintaining the various data structures
in our prototype implementation.
The fact that our approach allows
equality checks to be skipped is vital in other domains
where the checker/verifier may be much slower than NFA bisimulation.

\subsection*{Q3: How does ReGiS compare with existing regex optimizers?}

Although existing regex optimizers may terminate quickly, they do not guarantee minimality of the result regular expressions. 
We show that our regular expression
optimization tool produces high-quality results compared to existing tools,
thereby validating design choices such as our regular expression cost metric.
We identified several existing open-source regex optimization tools,
\textit{Regexp-Optimizer} \cite{RegexOpt} (we will refer to this as \textit{Opt03}),
\textit{RegexOpt}, \cite{RegexpOptimizer} (we will refer to this as \textit{RegOpt}), and
\textit{Regular Expression Gym}, a part of the \textit{Noam} project \cite{Noam},
and installed them locally on our workstation.
We needed a set of ``ground truth'' regular expressions for which we know the minimum-cost equivalent
expressions.
Thus, we selected the regular expressions from Q1
for which our tool reported a minimum, giving us 2176 inputs, and from these 
we selected only those containing at least one Kleene star, giving us 1574.
We then transformed each of these into an equivalent expression
known to be superlinear---we used the RXXR regular expression static analysis tool \cite{Kirrage,DBLP:journals/corr/RathnayakeT14}
to check whether an expression was vulnerable, and if not, we randomly applied semantics-preserving
transformations known to increase complexity, e.g., $\chr{a}^{\ast} \rightarrow \chr{a}^{\ast}\chr{a}^{\ast}$.
This was repeated until RXXR confirmed vulnerability.
Finally, RXXR provided us with an attack string,
allowing us to target the vulnerability.

Using these superlinear expressions, we ran ReGiS and the
other optimizers.
The output expression from each was given to the PCRE regex engine,
and matched against the respective attack string.
{We found that the number of PCRE steps is generally proportional to PCRE runtime. Ultimately, the number of steps provides a more ``implementation-independent'' measure of regular expression complexity than runtime, since it roughly corresponds to number of steps in the NFA traversal (Section \ref{subsec:backtracking}).}
Figure \ref{fig:exp05} shows
the number of PCRE steps on the respective regular expressions.
Most points appear above the diagonal, meaning ReGiS produced
lower-cost regular expressions in terms of PCRE matching complexity.

\begin{figure}[tb]
\centering
{\includegraphics[trim = 0in 0in 0in 0in, clip,width=0.75\linewidth]{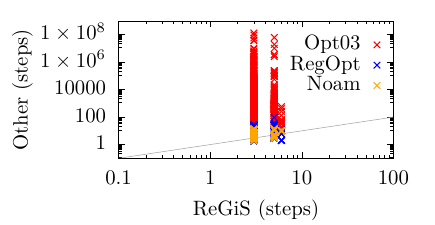}}\\%
{\includegraphics[trim = 0in 0in 0in 0in, clip,width=0.75\linewidth]{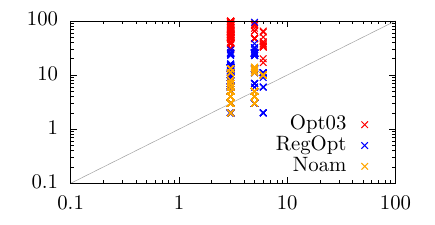}}
	\caption{Quality of ReGiS results versus open-source regex optimizers: (a) Number of PCRE steps to match against attack string; (b) Detail near diagonal.}
\label{fig:exp05}
\end{figure}

\section{Discussion and Future Work \pages{(1 page)}}
\label{sec:discussion}

While our benchmarks are small, they are exhaustive in the sense that  \textit{all expressions} having the given structure/bounds are included.
Tools like RXXR \cite{Kirrage,DBLP:journals/corr/RathnayakeT14} accept an input regex and identify a \textit{vulnerable subexpression} leading to superlinear behavior,
and this subexpression is often smaller than the input, meaning that handling small expressions has real-world value.
For example, in a 2793-regex dataset that the RXXR authors mined from the internet, RXXR identifies 122 regexes as vulnerable---in 99 of these, the vulnerable subexpression
has length 50 or less, even though the vulnerable input regexes have lengths up to 1067 (vulnerable subexpressions were up to 53x smaller than the inputs, averaging 3.9x smaller).

Even focusing on small exhaustive regular expression benchmarks, serious problems can arise
%serious ReDoS risk
(Figure \ref{fig:superlinear} shows superlinear behavior of a single-character regular expression),
and we have shown that existing optimizers fail to offer workable solutions.

While we believe ReGiS to be an important step toward the high-level goal
of scaling up synthesis, especially in regards to regular expression optimization,
there are engineering and research
challenges we plan to address in future work.

\begin{compactitem}
\item Our regular expression optimization technique handles ``pure'' regular expressions, while many real-world regexes
go beyond this core language.
It is not fundamentally difficult to extend support, but this will require giving Egg more flexible rewrite-rule functionality, such as operations on character classes.
	Note that the primary ingredient of superlinear behavior is alternation under Kleene star, leading to exponential backtracking (Section \ref{subsec:backtracking})---no
	functionality outside pure regular expressions is needed to trigger this.

		\vspace{6pt}

\item Solver frameworks powered by DPLL rely on \textit{heuristics} to improve
search performance. There are similar opportunities for carefully-designed heuristics here.
For example, we assign Unifiers based on lowest overlay edge cost, but these could also be chosen
based on ``similarity'' of contained expressions, increasing likelihood of fast equality checks.
		\vspace{6pt}

\item Additional optimizations are possible. \textit{Verification} is often a bottleneck in synthesis---%
for us, this is an NFA equality check, which is usually fast,
but we could, e.g., parallelize several equality checks within each pair of E-classes, utilizing
the fastest result.
\textit{Incremental} node/expression cost maintenance in the E-graph would also improve performance.
We found that the Hopcroft-Karp (HK) NFA bisimulation algorithm which exploits equivalence classes does
not seem to offer improvement over a na\"ive on-the-fly NFA-to-DFA bisimulation check on the
DAG-like
Thompson NFAs.
We plan to investigate other equality-checking techniques \cite{DBLP:conf/popl/BonchiP13}.
		\vspace{6pt}

%\item We are currently developing an input format compatible with SMT-LIB \cite{BarFT-SMTLIB}
%to allow ease of use and interoperability with other synthesis tools.
\end{compactitem}

\section{Related Work \pages{(2 pages)}}
\label{sec:rel_work}

%In this section, we give a brief synopsis of the work most closely related to ours.

\subsection*{Program Synthesis and Superoptimization}

\citet{DBLP:conf/cav/JeonQSF15} tackle the synthesis scalability problem by handling multiple
enumeration steps in parallel.
\citet{DBLP:conf/tacas/AlurRU17} use a divide-and-conquer approach, partitioning the set of inputs, solving a smaller
synthesis problem within each partition, and then combining the results together.
Superoptimization \cite{DBLP:conf/asplos/Schkufza0A13,DBLP:conf/asplos/PhothilimthanaT16} is an approach for optimizing sequences of instructions.
In contrast, we perform rewriting (syntactic) and enumeration (semantic) steps in parallel---the interplay
between these is key to our approach.

\subsection*{Combining Rewriting with Synthesis}

\citet{DBLP:conf/pldi/HuangQSW20} describe an approach which combines parallelism with a divide-and conquer methodology
to perform synthesis with enumeration and deduction (conceptually similar to rewriting). ReGiS offers additional parallelization
opportunities, by allowing the enumeration and rewriting to happen in parallel.

%One key observation about \textit{rewriting} is that it is inherently \textit{local}---an expression is transformed step-by-step into another expression.
Using a rewriting-based approach for expression
optimization requires a technique for overcoming \textit{local minima}
in the rewriting. We achieve this by combining equality saturation-based rewriting
with syntax-guided synthesis.
%
%As in our approach,
\citet{DBLP:conf/pldi/NandiWAWDGT20} leverage
equality saturation \cite{DBLP:journals/corr/abs-1012-1802, DBLP:journals/pacmpl/WillseyNWFTP21}
while performing search for CAD model decompilation, but
they do not offer global minimality guarantees, or cost functions beyond expression size.
They overcome local minima by speculatively adding non-semantics-preserving rewrites, and ``undo'' these later, after the final expression has been extracted.

Cosy \cite{DBLP:conf/icse/LoncaricET18, DBLP:conf/pldi/LoncaricTE16} enumeratively synthesizes
data structures, using a \textit{lossy} ``deduplication'' mechanism
to maintain equivalence classes. Our approach compactly maintains \textit{all} equivalence
class members, not just representatives.
\citet{DBLP:conf/vmcai/SmithA19} combine
synthesis with rewriting, but require a \textit{term-rewriting
system} (TRS). Constructing a TRS is undecidable, so human input (using a proof
assistant) is typically needed, while our approach is fully-automated.

There are machine learning-based approaches that combine synthesis-like search with rewriting \cite{DBLP:conf/fmcad/0002S16, NeuRewriter}.
The key distinction between these approaches and ours
is the need for \textit{training data}, whereas our approach is designed to operate without this.
These data-driven approaches typically cannot guarantee \textit{global} optimality, and also cannot
propose \textit{new} rules---said another way, they offer a purely syntactic approach to optimization.

% There are machine learning-based approaches that combine synthesis-like search with rewriting.
% The key distinction between these approaches and ours
% is the need for \textit{training data}, whereas our approach is designed to operate without this.
% For example, the {\sc{Swapper}} system \cite{DBLP:conf/fmcad/0002S16} allows synthesis of simplifiers
% for logic formulas, using the {\sc{Sketch}} synthesizer.
% NeuralRewriter \cite{NeuRewriter} learns an order in which to apply rewrite rules, resulting in higher quality results
% than other standalone rewriters.
% These data-driven approaches cannot guarantee \textit{global} optimality, and also cannot
% propose \textit{new} rules---said another way, they offer a purely syntactic approach to optimization.

A related topic is \textit{synthesizing rewrite rules},
which has been investigated in the context of security hardware/software \cite{DBLP:conf/pldi/LeeLOY20}
and SMT \cite{DBLP:conf/sat/NotzliRBNPBT19}.
Another related direction is \textit{theory exploration}, which uses E-graphs to enumerate lemmas for 
theorem proving \cite{DBLP:journals/corr/abs-2009-04826}.

\balance

\subsection*{Regular Expression Denial of Service (ReDoS) Attacks}

Algorithmic complexity attacks are well-known, with early research in this area focusing on
network intrusion detection and other systems-related functions \cite{Crosby, Smith}.
Catastrophic backtracking and attacks against the complexity of regular expressions have also begun to appear in the literature \cite{Davis1},
but while some useful rule-of-thumb guides have been available for some time \cite{Catastrophic,AttacksAndDefenses,CheckMarx},
\textit{general awareness}
of these vulnerabilities may not be widespread.

\subsection*{Vulnerable Regular Expression Detection}

Existing work \textit{detects} regular expressions vulnerable to ReDoS.
\textit{Static analysis} can find the complexity of a regular expression, e.g.,
\citet{DBLP:journals/corr/BerglundDM14} formalize regular expression matching in Java, and statically
determine whether a given Java regular expression has exponential runtime.
%\item Introduce pNFA (prioritized NFA, makes non-deterministic choices in an ordered manner
\citet{Weideman} build on that work, providing more precise characterization of worst-case runtime.

A related problem is finding an \textit{attack string} that causes poor performance on
a given regular expression.
ReScue \cite{Shen} does this via genetic search and properties of the \textit{pumping lemma}.
RXXR \cite{Kirrage,DBLP:journals/corr/RathnayakeT14} finds an attack string, and a vulnerable \textit{subexpression}
causing superlinear behavior on that string.
Rexploiter \cite{Wustholz} constructs an \textit{attack automaton}, characterizing
the entire language of attack strings.
These approaches are complimentary to
ours, which seeks to \textit{remove vulnerabilities} from known-vulnerable regular expressions.

\subsection*{ReDoS Attack Prevention}

Some authors have suggested updating existing backtracking algorithms with a \textit{state cache} \cite{Davis2}, which
\textit{fully} memoizes traversal of the NFA, achieving polynomial runtime at the expense of significant memory usage.
Since developers may seek more power than core regular expressions provide, other work
seeks to extend the expressibility of Thompson-like approaches to support additional features
like backreferences \cite{DBLP:conf/infocom/NamjoshiN10}.

\textit{Synthesis from examples} \cite{DBLP:conf/pldi/Chen0YDD20,Pan} is related to our work.
This requires %training data in the form of
input-output examples, and an expression is automatically constructed to fit these examples.
The work does not directly address ReDoS, but could potentially be leveraged for that purpose.

\section{Conclusion}
\label{sec:conclusion}

We present rewrite-guided synthesis (ReGiS), a new approach for
expression optimization that interfaces
syntax-guided synthesis (SyGuS) with equality saturation-based rewriting.
We leverage ReGiS to address
the problem of optimizing superlinear regular expressions,
demonstrating the power and flexibility of our framework.

% %% Acknowledgments
\section*{Acknowledgements}                            %% acks environment is optional
%                                         %% contents suppressed with 'anonymous'
%   %% Commands \grantsponsor{<sponsorID>}{<name>}{<url>} and
%   %% \grantnum[<url>]{<sponsorID>}{<number>} should be used to
%   %% acknowledge financial support and will be used by metadata
%   %% extraction tools.
We are grateful to the anonymous reviewers for their detailed comments.
This work is supported by the
\grantsponsor{GS1}{National Science
Foundation}{http://dx.doi.org/10.13039/100000001} under Grant
No.~\grantnum{GS1}{2018910} and Grant
No.~\grantnum{GS1}{2124010}.
%Any opinions, findings, and
%conclusions or recommendations expressed in this material are those
%of the author and do not necessarily reflect the views of the
%National Science Foundation.

%\newpage
%% Appendix
%\appendix
%\section{Appendix}

\newpage

%% Bibliography
%\bibliography{main}
\printbibliography

\clearpage
\appendix

\section{Proofs of Theorems}
\label{subsec:proofs}

\begin{customthm}{thm:soundness}{Soundness}
\thmsoundness
\end{customthm}
\begin{proof}
	If the algorithm returns an expression $e'$, then the
	Figure \ref{fig:regis_algo} machine's final step must have been either {\sc Saturate} or
	{\sc Union2} (these are the only rules that allow termination).

	In the case of {\sc Saturate} (with a \textit{complete} rewrite rule set $W$),
	by definition of E-graph saturation, all
	possible equalities (modulo the rewrite rules $W$) have been incorporated
	into the E-graph $E$, meaning the source E-class $\mathit{class}(E, e)$
	contains all grammatically valid expressions that are
	syntactically equivalent to $e$.
	Let $e_m$ be a minimal expression in $hl(e)$. By completeness of $W$,
	since $\semantics{e_m} \bisim{} \semantics{e}$, we know that
	$e_m \in \mathit{class}(E, e)$.
	Thus, since the return value is $E_{min}(S_0) = E_{min}(e) = \mathit{min}(\mathit{class}(E, e))$, we know that $\cost{E_{min}(S_0)} \leq \cost{e_m}$.

	In the case of {\sc Union2}, we know that $S_0$ and the minimal unprocessed target $T_k$
	are equivalent, since they are in the same E-class.
	No target $T_j$ (where $j < k$) is equal to the source, since the index $k$ can
	only be incremented by {\sc Union3} when processing an inequality.
	Because {\sc Enumerate} adds all elements of $hl(e)$ to $T$ in increasing order of cost,
	we have	that $T_k$ is the lowest-cost element of $hl(e)$ that is equivalent to the source.
	Thus, $E_{min}(S_0)$ is no greater than the cost of the minimal element
	of $hl(e)$.
\end{proof}

\begin{customthm}{thm:completeness}{Completeness}
\thmcompleteness
\end{customthm}
\begin{proof}
	{\sc Enumerate} adds every expression with height no greater than $\mathit{height}(e)$
	to $T$ in increasing order of cost.
	Thus, the minimal element $e_m \in hl(e)$ will be added to $T$ at some index j,
	and $T_i \not\bisim{} e$ for all $i < j$.
	Because new overlay graph edges initially have unit weight, once $e_m$ appears as
	$T_j$, {\sc Equality} is able to register the equality $e_m = e$, and
	{\sc Union1} will be able to place $e_m$ and $e$ into the same E-class.
	For each target $T_i$ where $i < j$, {\sc Inequality} and {\sc Union3} will
	ensure that the index $k$ of the minimal unprocessed target $T_k$ is incremented,
	resulting in $k = j$.
	At this point, the {\sc Union2} rule will allow the algorithm to terminate
	with $E_{min}(S_0)$, which has cost no greater than $e_m$.
\end{proof}

\begin{lemma}\label{lem:heightcost}
\lemheightcost
\end{lemma}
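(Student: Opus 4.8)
The plan is to prove Lemma~\ref{lem:heightcost} by a direct case analysis on the structure of a minimal-height witness for $L(R_1)=L(R_2)$, exploiting the explicit form of \costfn{} and the scaling constants $K_1, K_2$. The key observation is that the hypothesis forbids \emph{any} equivalent regular expression whose height strictly interpolates between $\expheight{R_2}$ and $\expheight{R_1}$; in particular $\expheight{R_1} = \expheight{R_2}+1$, so $R_1$ has height exactly one greater than $R_2$. I would first argue that because $R_1$ is equivalent to something of strictly smaller height, $R_1$ must itself be ``wasteful'' at its root or just below it --- i.e. the extra level of nesting is removable by one of the Kleene-algebra simplifications~(1)--(21), most plausibly one of $A^{\ast\ast}=A^{\ast}$~(19), $A^{\ast}A^{\ast}=A^{\ast}$~(18), $1\cdot A=A$~(6), $A\cdot 1 = A$~(7), $0\cdot A=0$~(10), $A\cdot 0 = 0$~(11), $A+0=A$~(3), $A+A=A$~(4), $1^{\ast}=1$~(20), or $0^{\ast}=1$~(21). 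The idea is that if none of these collapse the top of $R_1$, then every height-reducing transformation would have to act strictly inside a proper subexpression, contradicting minimality of the height gap (it would produce an intermediate-height equivalent).

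\textbf{Main steps, in order.} First, I would reduce to the case $\expheight{R_1}=\expheight{R_2}+1$ using the cardinality hypothesis. Second, I would establish a structural dichotomy: either the top-level operator of $R_1$ (together with the top of its children) matches the left-hand side of one of the ``height-shrinking'' axioms above, or it does not. Third, in the matching case, I set $R'$ to be the result of applying that one rewrite at the root; since every such axiom strictly decreases the cost contribution of the affected node --- removing a factor of $K_2$ for star-collapse~(18),(19),(20),(21), a factor of $K_1$ for the $\cdot$-with-identity/annihilator rules~(6),(7),(10),(11), and collapsing a $+$ for~(3),(4) --- and leaves the rest of the tree untouched, we get $\cost{R'}\le\cost{R_1}$ together with $\expheight{R'}\le\expheight{R_2}$ and $L(R')=L(R_2)$, which is exactly the conclusion. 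Fourth, in the non-matching case, I argue this is impossible: any height-reducing path from $R_1$ to an equivalent expression of height $\le\expheight{R_2}$ must, since it cannot fire at the root, first produce an equivalent $R''$ with $\expheight{R_1}>\expheight{R''}$ by reducing within a subtree --- but then $\expheight{R_1}-1 \ge \expheight{R''} \ge \expheight{R_2}$ would, in the boundary subcase, still leave room only if $R''$ already has height $\expheight{R_2}$, and I would need to chase that this forces the root of $R''$ (hence of $R_1$) to be reducible after all. It may be cleaner here to appeal directly to Kleene algebra completeness: since $L(R_1)=L(R_2)$, there is a proof $R_1 = R_2$ using axioms~(1)--(15), and I extract from the first step of a normalizing (height-non-increasing) derivation the single rewrite that does the work.

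\textbf{The hard part} will be the non-matching case --- ruling out the possibility that $R_1$'s excess height is ``spread out'' so that no single local rewrite removes it, yet no intermediate-height equivalent exists. The cardinality hypothesis is exactly what is supposed to prevent this, but turning it into a clean argument requires care: I expect to need a lemma that any regular expression equivalent to one of strictly smaller height admits \emph{some} single Kleene-algebra rewrite (possibly deep in the tree) that does not increase height and strictly decreases cost, and then to iterate. If a deep rewrite is used, its ancestors' cost contributions only \emph{scale} (by products of $K_1, K_2$), so a strict decrease downstairs remains a strict decrease upstairs, and height upstairs is unchanged --- this monotonicity of \costfn{} under the $K_1,K_2$ weighting is the technical engine, and I would isolate it first. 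The residual subtlety is ensuring the iterated rewrites reach height $\le\expheight{R_2}$ in one ``height level'', which is where I lean on $\expheight{R_1}=\expheight{R_2}+1$ to keep the bookkeeping finite.
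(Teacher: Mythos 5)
Your opening reduction is sound (if $\expheight{R_1} \geq \expheight{R_2}+2$ then $R_2+0$ is an equivalent expression of strictly intermediate height, contradicting the cardinality hypothesis, so indeed $\expheight{R_1}=\expheight{R_2}+1$), and it is a nice observation the paper does not make. The gap is in your fourth step. You assert that if no height-shrinking axiom fires at (or just below) the root of $R_1$, the hypotheses are contradictory, but you never establish this, and your fallback --- extracting ``the single rewrite that does the work'' from ``a normalizing (height-non-increasing) derivation'' --- presupposes exactly the hard fact: that some Kleene-algebra derivation from $R_1$ to a lower-height equivalent exists which never increases height. Nothing in completeness gives you that. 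The paper's own motivating example ($1+\chr{a}^{\ast}$, which must first be rewritten to the strictly larger $1+1+\chr{a}\chr{a}^{\ast}$ before it can shrink) shows that derivations between equivalent expressions may be forced to detour through expressions taller and costlier than $R_1$ before descending. Your auxiliary lemma (``any regex equivalent to one of strictly smaller height admits a single height-non-increasing, cost-decreasing rewrite, possibly deep in the tree'') is therefore a strong normalization claim that is doing all the work and is left unproven; the context-monotonicity of $\costfn$ under the $K_1,K_2$ scaling that you isolate is true but does not touch this difficulty. A secondary issue: your dichotomy keys on the root \emph{syntactically} matching an axiom's left-hand side, but height reduction may first require commutativity/associativity steps to expose the pattern (e.g.\ $0+A$ versus $A+0$), so even the ``matching'' side of your case split is narrower than it needs to be.

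For contrast, the paper's proof takes a rewrite \emph{path} from $R_1$ to $R_2$ as given and inducts on its length, using two ingredients you do not have: (i) a monotonicity constraint on individual rules (no rule decreases height while increasing cost, or vice versa), which lets it control where intermediate expressions of the path can sit in the height--cost plane, including paths that climb above $\expheight{R_1}$ before descending; and (ii) an exhaustive table, in the length-2 base case, of every height-reducing rule together with every one-step predecessor $R_1$ of its left-hand side (including commuted forms and forms with a boxed rewrite inside a subexpression), producing a witness $R'$ in each case. To repair your argument you would need either to prove your normalization lemma outright or to adopt something like the paper's path induction to handle the detours; as written, the non-matching case is an unfilled hole rather than an impossibility.
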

\begin{proof}
	Figure \ref{fig:thm3} shows this graphically.
	There are no regular expressions in the vertical space between $R_1$ and $R_2$.
	The path connecting $R_1, R_2$ is formed by a sequence of rewrite rules.
	No segment (application of a single rewrite rule) can angle down and to the right
	or up and to the left,
	since no individual rewrite rule increases cost while reducing height (or vice versa).

	\begin{figure}[tb]
		\centering
{\includegraphics[trim = 0in 0in 0in 0in, clip,width=0.5\linewidth]{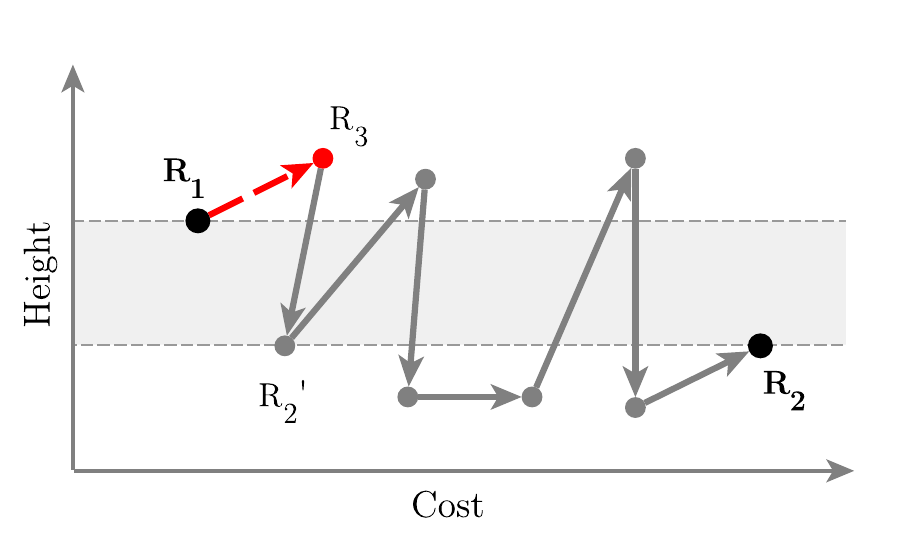}}
		\caption{Lemma \ref{lem:heightcost} base case.}
		\label{fig:thm3}
	\end{figure}

	We proceed by induction over length $n$ of the path from $R_1$ to $R_2$.
	Based on the previously-mentioned segment angle constraint, the smallest possible
	$n$ is 2.

	\paragraph{Base case $n = 2$.} This case is visualized by the path $R_1 \rightarrow R_3 \rightarrow R_2'$ in Figure \ref{fig:thm3}. If $R_3$ is below the shaded area, it cannot be
	to the right of $R_1$ due to the angle constraint, meaning $\cost{R_3} \leq \cost{R_1}$,
	and we are finished with the proof, since we have found an $R' = R_3$ such
	that $\cost{R'} \leq \cost{R_1}$ and $\expheight{R'} \leq \expheight{R_2'}$.

	Consider the case where $R_3$ is above the shaded area. In this case, the
	edge from $R_3$ to $R_2'$ represents a decrease in height. Examining only
	\textit{height-reducing}
	rewrite rules used in our algorithm (Section \ref{subsec:regex_rewrite}),
	we are limited to the following possibilities for $R_2',R_3$ (where $A$ is any
	regular expression). \\
	%\vspace{0.5\baselineskip}\\

	%\begin{table*}[h!]
    \begin{center}
    \scalebox{0.850}{
	\bgroup
	{%\small
	\def\arraystretch{1.40}%  1 is the default, change whatever you need	
	\begin{tabular}{|l||l|l||p{1.70in}|} \hline
		rule & $R_2'$ & $R_3$ & $R_1$ \\\hline\hline
		3 & $A$ & $A+0$ & $A+0$, $\boxed{A}+0$, $0+A$, $A$ \\\hline
		4 & $A$ & $A+A$ & $A+A,\boxed{A}+A$, $A+\boxed{A}$, $A$ \\\hline
		6 & $A$ & $1\cdot A$ & $1\cdot A,1\cdot \boxed{A}$, $A$ \\\hline
		7 & $A$ & $A\cdot 1$ & $A\cdot 1$, $\boxed{A}\cdot 1$, $A$ \\\hline
		10 & $0$ & $0\cdot A$ & $0\cdot A$, $0\cdot \boxed{A}$, $0$ \\\hline
		11 & $0$ & $A\cdot 0$ & $A\cdot 0$, $\boxed{A}\cdot 0$, $0$ \\\hline
		16 & $A^{\ast}$ & $1 + A{\cdot}A^{\ast}$ & $1 + A{\cdot}A^{\ast}$, $1 + \boxed{A}{\cdot}A^{\ast}$, $1 + A{\cdot}\boxed{A^{\ast}}$, $1 + A{\cdot}\boxed{A}{}^{\ast}$, $A{\cdot}A^{\ast}+1$, $A^{\ast}$\\\hline
		17 & $A^{\ast}$ & $1 + A^{\ast}{\cdot}A$ & $1 + A^{\ast}{\cdot}A$, $1 + A^{\ast}{\cdot}\boxed{A}$, $1 + \boxed{A^{\ast}}{\cdot}A$, $1 + \boxed{A}{}^{\ast}{\cdot}A$, $A^{\ast}{\cdot}A+1$, $A^{\ast}$ \\\hline
		18 & $A^{\ast}$ & $A^{\ast}{\cdot}A^{\ast}$ & $A^{\ast}{\cdot}A^{\ast}$, $\boxed{A^{\ast}}{\cdot}A^{\ast}$, $A^{\ast}{\cdot}\boxed{A^{\ast}}$, $\boxed{A}{}^{\ast}{\cdot}A^{\ast}$, $A^{\ast}{\cdot}\boxed{A}{}^{\ast}$, $A^{\ast}$ \\\hline
		19 & $A^{\ast}$ & $A^{\ast\ast}$ & $A^{\ast\ast}$, $\boxed{A^{\ast}}{}^{\ast}$, $\boxed{A}{}^{\ast\ast}$, $A^{\ast}$ \\\hline
		20 & $1$ & $1^{\ast}$ & $1^{\ast}$, $1$ \\\hline
		21 & $1$ & $0^{\ast}$ & $0^{\ast}$, $1$ \\\hline
	\end{tabular}
	}
	\egroup
    }
    \end{center}
	%\end{table*}

	%\vspace{0.5\baselineskip}\\
	Based on these options for $R_3$, the corresponding options for $R_1$ are listed,
	using $\boxed{R}$ to denote a single rewrite rule applied to some subexpression of $R$,
	such that $\expheight{\boxed{R}} \leq \expheight{R}$.
	In all of these cases for $R_1$, we can find an $R'$ such that $\cost{R'}\leq\cost{R_1}$
	and $\expheight{R'}\leq\expheight{R_2'}$. In any case that does not contain $\boxed{A^{\ast}}$,
	we can simply let $R' = R_2'$. For cases that contain $\boxed{A^{\ast}}$, if $A$
	is of the form $R^{\ast}$ for some $R$, then we can let $R' = A$. Otherwise, the $\boxed{A^{\ast}}$ can instead be written as $\boxed{A}{}^{\ast}$, meaning we can again
	let $R' = R_2'$.

	\paragraph{Inductive step $n > 2$.} Assume the property holds for all $k < n$.
	Given the path of length $n$ between $R_1$ and $R_2$, consider the first segment
	$R_1 \rightarrow R_3$ (shown as the red/dashed arrow in Figure \ref{fig:thm3}).
	If $R_3$ is \textit{below} the shaded area, it cannot be to the right of $R_1$
	due to the angle constraint, meaning $\cost{R_3} \leq \cost{R_1}$ and
	$\expheight{R_3} \leq \expheight{R_2}$, so we are finished with the proof (we have found $R' = R_3$).
	
	Otherwise, if $R_3$ is \textit{above} the shaded area, we apply the induction
	hypothesis to $R_3, R_2$, giving us $R_4$ such that $\cost{R_4} \leq \cost{R_3}$
	and $\expheight{R_4} \leq \expheight{R_2}$.
	If $R_4$ is to the left of $R_1$, we are finished ($R' = R_4$).
	Otherwise, we can apply the induction hypothesis to $R_1,R_4$, giving us $R_5$
	such that $\cost{R_5} \leq \cost{R_1}$ and
	$\expheight{R_5} \leq \expheight{R_4}$. Since $\expheight{R_4} \leq \expheight{R_2}$, we
	have $\expheight{R_5} \leq \expheight{R_2}$, and we are finished (we found $R' = R_5$).
\end{proof}

\begin{customthm}{thm:heightcost}{Height vs. Cost}
\thmheightcost
\end{customthm}
\begin{proof}
We proceed by induction over the number of regular expressions with height between that
	of $R_1$ and $R_2$, i.e., $n = |\{R \;|\; \mathit{height}(R_1) > \mathit{height}(R) > \mathit{height}(R_2)\}|$.

	\paragraph{Base case $n = 0$.} This follows from Lemma \ref{lem:heightcost}.
	\paragraph{Inductive step $n > 0$.} Assume the Theorem holds for all $k < n$.
	Assume $\expheight{R_1} > \expheight{R_2}$ and $\cost{R_1} \leq \cost{R_2}$.
	Let $R_3$ be a regular expression such that $\expheight{R_1} > \expheight{R_3} > \expheight{R_2}$.

	Case I: $\cost{R_3} > \cost{R_1}$. Applying the induction hypothesis to $R_1,R_3$,
	we can obtain an $R_4$ such that $\expheight{R_4} \leq \expheight{R_3}$ and
	$\cost{R_4} \leq \cost{R_1}$.
	If $\expheight{R_4} \leq \expheight{R_2}$, we are finished with the proof (we found $R' = R_4$).
	Otherwise if $\expheight{R_4} > \expheight{R_2}$, since we have $\cost{R_4} \leq \cost{R_2}$
	from the induction hypothesis, we can apply the induction hypothesis to $R_4,R_2$ to
	obtain an $R_5$ such that $\expheight{R_5} \leq \expheight{R_2}$ and
	$\cost{R_5} \leq \cost{R_4}$. Since $\cost{R_4} \leq \cost{R_1}$, we are
	finished with the proof ($R' = R_5$).

	Case II: $\cost{R_3} \leq \cost{R_1}$.
	Using the induction hypothesis, we have $\cost{R_3} \leq \cost{R_2}$.
	Applying the induction hypothesis to $R_3, R_2$, we can obtain an $R_5$ such
	that $\expheight{R_5} \leq \expheight{R_2}$ and $\cost{R_5} \leq \cost{R_3}$.
	By Case II assumption, we have $\cost{R_5} \leq \cost{R_1}$,
	so we are finished with the proof ($R' = R_5$).
\end{proof}

\end{document}